\newtheorem{thm}{Theorem}
\newtheorem{lem}[thm]{Lemma}
\newtheorem{prop}[thm]{Proposition}
\newtheorem{defn}[thm]{Definition}
\newcommand{\bi}{\begin{array}[t]{@{}l@{}}}
\newcommand{\ei}{\end{array}}
\newcommand{\ba}{\begin{array}}
\newcommand{\ea}{\end{array}}
\newcommand{\bda}{\[\ba}
\newcommand{\eda}{\ea\]}
\newcommand{\bp}{\begin{quote}\tt\begin{tabbing}}
\newcommand{\ep}{\end{tabbing}\end{quote}}
\newcommand{\EE}{{\cal E}}
\newcommand{\tlabel}[1]{\mbox{(#1)}}
\newcommand{\fig}[3]
        {\begin{figure*}[t]#3\
        \caption{\label{#1}#2}\  \end{figure*}}
\newcommand{\myirule}[2]{{\renewcommand{\arraystretch}{1.2}\ba{c} #1
                      \\ \hline #2 \ea}}
\newcommand{\multiset}[1]{\{\!\{#1\}\!\}}
\newcommand{\varsEE}[1]{\mathit{vars}(#1)}
\newcommand\Angle[1]{\langle#1\rangle}
\newcommand\Config[2]{\Angle{#1, #2}}
\newcommand{\solveRel}[2]{#1 \Rightarrow #2}
\newcommand{\solveRels}[2]{#1 \Rightarrow^* #2}
\newcommand{\solve}[1]{\mbox{\it solve}(#1)}
\newcommand{\regEq}{\approx}
\newcommand{\turns}{\, \vdash \,}
\newcommand{\sgap}{\quad}
\newcommand{\mathem}{\sf}
\newcommand{\REC}{\mbox{\mathem rec}}
\newcommand{\CASE}{\mbox{\mathem case}}
\newcommand{\ELSE}{\mbox{\mathem else}}
\newcommand{\IF}{\mbox{\mathem if}}
\newcommand{\OF}{\mbox{\mathem of}}
\newcommand{\THEN}{\mbox{\mathem then}}
\newcommand{\WHERE}{\mbox{\mathem where}}
\newcommand{\canonic}[1]{{\cal C}(#1)}
\newcommand{\inj}{inj}
\newcommand{\arrow}{\rightarrow}
\newcommand{\comment}[1]{}
\newcommand{\ignore}[1]{}
\newcommand{\pjs}[1]{}
\newcommand{\jw}[1]{}
\newcommand{\lang}{{\cal L}}
\newcommand{\deriv}[2]{d_{#2}(#1)} 
\newcommand{\derivC}[2]{\canonic{d_{#2}(#1)}} 
\newcommand{\mkEmp}[1]{{\mathit mkEmpty}_{#1}} 
\newcommand{\RExp}{\textit{RE}}
\newcommand\semeq\equiv              
\newcommand{\conc}{\cdot}
\newcommand{\simp}[1]{\mathit{cnf}(#1)}
\newcommand\Descendant\preceq
\newcommand\power{\wp}
\newcommand{\shuffle}{\|}
\newcommand{\flatten}[1]{|#1|}
\newcommand{\SeqOp}{\textsc{Seq}}
\newcommand\Seq[2]{\textsc{Seq}\ #1\ #2}
\newcommand{\Left}{\textsc{Inl}}
\newcommand{\Right}{\textsc{Inr}}
\newcommand{\Fold}{\textsc{Fold}}
\newcommand{\Eps}{\textsc{Eps}}
\newcommand{\Sym}{\textsc{Sym}}
\newcommand{\Kons}{{\cal K}}
\newcommand{\Desc}[1]{{\cal D}(#1)}
\title{Solving of Regular Equations Revisited (extended version)}
\author{Martin Sulzmann\inst{1} and Kenny Zhuo Ming Lu \inst{2}}
\institute{Karlsruhe University of Applied Sciences \\
  \email{martin.sulzmann@hs-karlsruhe.de}
  \and
   Nanyang Polytechnic \\
  \email{luzhuomi@gmail.com}}
\begin{document}

\maketitle

\begin{abstract}
  Solving of regular equations via Arden's Lemma is folklore knowledge.
  We first give a concise algorithmic specification of all elementary solving steps.
  We then discuss  a computational interpretation
  of solving in terms of coercions that transform parse trees of regular equations
  into parse trees of solutions.  
  Thus, we can identify some conditions on the shape of regular equations
  under which resulting solutions are unambiguous.
  We apply our result to convert a DFA to an unambiguous regular expression.
  In addition, we show that operations such as subtraction and shuffling
  can be expressed via some appropriate set of regular equations.
  Thus, we obtain direct (algebraic) methods without having
  to convert to and from finite automaton.
    \\ \mbox{} \\
  \noindent{\bf Keywords:} regular equations and expressions, parse trees, ambiguity, subtraction, shuffling
\end{abstract}

\section{Introduction}

The conversion of a regular expression (RE) into a deterministic finite automaton (DFA)
is a well-studied topic. Various methods and optimized implementations exist.
The opposite direction has received less attention.
In the literature, there are two well-known methods to
translate DFAs to REs, namely,
state elimination~\cite{DBLP:journals/tc/BrzozowskiM63} and
solving of equations via Arden's Lemma~\cite{DBLP:conf/focs/Arden61}.

The solving method works by algebraic manipulation of equations.
Identity laws are applied to change the syntactic form of an equation's right-hand side
such that Arden's Lemma is applicable. Thus, the set of equations is reduced
and in a finite number of steps a solution can be obtained.
State elimination has a more operational flavor and reduces states by introducing transitions labeled with regular expressions.
The state elimination method appears to be better studied in the literature.
For example, see the works~\cite{Han:2013:SEH:2595015.2595018,DBLP:journals/corr/abs-1008-1656,Ahn:2009:ISE:1577697.1577720}
that discuss heuristics to obtain short regular expressions.

In this paper, we revisit solving of regular equations via Arden's Lemma.
Specifically, we make the following contributions:

\begin{itemize}
\item We give a concise algorithmic description of solving of regular equations
  where we give a precise specification of all algebraic laws applied (Section~\ref{sec:solve-reg-eq}).  

\item We give a computational interpretation of solving by means of coercions
      that transform parses tree of regular equations
      into parse trees of solutions.
      We can identify simple criteria on the shape of regular equations
      under which resulting solutions are unambiguous (Section~\ref{sec:comput-inter}).

\item We apply our results to the following scenarios:
  \begin{itemize}
  \item We show that regular expressions obtained from DFAs via Brzozowski's algebraic method
    are always unambiguous (Section~\ref{sec:algebraic-method}).
    
  \item We provide direct, algebraic methods to obtain the subtraction 
        and shuffle among two regular expressions (Section~\ref{sec:subtract-intersect}).
        Correctness follows via some simple coalgebraic reasoning.
  \end{itemize}
\end{itemize}

We conclude in Section~\ref{sec:related-works}
where we also discuss related works.

The appendix contains further details such as proofs and a parser
for regular equations. We also report on an implementation
for solving of regular equations in Haskell~\cite{regex-symbolic}
including benchmark results.

\section{Preliminaries}

Let $\Sigma$ be a finite set of symbols (literals) with   $x$, $y$, and $z$
ranging over $\Sigma$.
We write $\Sigma^*$ for the set of finite words over $\Sigma$, 
$\varepsilon$ for the empty word, and
$v \conc w$ for the concatenation of words $v$ and $w$. A language is
a subset of $\Sigma^*$.

\begin{defn}[Regular Languages]
  The set ${\cal R}$ of regular languages is defined inductively
  over some alphabet $\Sigma$ by
  \bda{lcl}
  R, S & ::= & \emptyset \mid \{ \varepsilon \} \mid \{ x \} \mid (R + S) \mid (R \conc S) \mid (R^*)
  \ \ \ \ \ \mbox{where $x \in \Sigma$.}
  \eda
  Each regular language is a subset of $\Sigma^*$
  where we assume that
   $R\conc S$ denotes $\{ v \conc w \mid v \in R \wedge w \in S\}$,
  $R+S$ denotes $R \cup S$
  and $R^*$ denotes $\{ w_1 \conc \cdots \conc w_n \mid n \ge0 \wedge \forall i \in \{1,\ldots,n\}.~ w_i \in R \}$.

  We write $R \semeq S$ if $R$ and $S$ denote the same set of words.
\end{defn}

We often omit parentheses by assuming that $^*$ binds tighter than $\conc$ and $\conc$ binds
tighter than $+$.
As it is common, we assume that $+$ and $\conc$ are right-associative.
That is, $R + S + T$ stands for $(R + (S + T))$ and
$R + S + R\conc S \conc T$ stands for $R + (S + (R \conc (S \conc T)))$.

%

\begin{defn}[Regular Expressions]
  The set $\RExp$ of \emph{regular expressions} is defined inductively
  over some alphabet $\Sigma$ by
\bda{lcl}
 r,s & ::= & \phi \mid \varepsilon \mid x \mid (r + s) \mid (r
 \conc s) \mid (r^*) \ \ \ \ \ \mbox{where $x \in \Sigma$.}
 \eda
\end{defn}
%

\begin{defn}[From Regular Expressions to Languages]
 The meaning function $\lang$ maps a regular expression to a language.
It is defined inductively as follows:
 \\
 $\lang(\phi) = \{ \}$.
 $\lang(\varepsilon) = \{ \varepsilon \}$.
 $\lang(x) = \{ x \}$.
 $\lang(r + s) = (\lang(r) + \lang(s))$.
 $\lang(r \conc s) = (\lang(r) \conc \lang(s))$.
 $\lang(r^*) = (\lang(r)^*)$.
\end{defn}
We say that regular expressions $r$ and $s$ are equivalent, $r\semeq s$, if $\lang(r) = \lang(s)$.
\begin{defn}[Nullability]
A regular expression $r$ is \emph{nullable} if $\varepsilon \in \lang(r)$.
\end{defn}

\begin{lem}[Arden's Lemma \cite{DBLP:conf/focs/Arden61}]  
  Let $R$, $S$, $T$ be regular languages where $\varepsilon \not\in S$.
  Then, we have that
  $R \semeq S \conc R + T$ iff $R \semeq S^* \conc T$.
\end{lem}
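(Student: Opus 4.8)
The plan is to prove the two directions of the biconditional separately, and to observe that the nullability side condition $\varepsilon \notin S$ is needed for only one of them.

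For the direction ``$R \semeq S^* \conc T$ implies $R \semeq S \conc R + T$'', I would proceed purely by algebraic manipulation of the language operators. Starting from the standard unfolding identity $S^* \semeq \{\varepsilon\} + S \conc S^*$, which follows immediately from the definition of Kleene star, I substitute $R \semeq S^* \conc T$ into the right-hand side $S \conc R + T$ and compute
\[
S \conc R + T \semeq S \conc S^* \conc T + T \semeq (\{\varepsilon\} + S \conc S^*) \conc T \semeq S^* \conc T \semeq R,
\]
using only distributivity of concatenation over union (together with $\{\varepsilon\} \conc T = T$) and the unfolding identity. Note that this direction does \emph{not} use $\varepsilon \notin S$.

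For the converse ``$R \semeq S \conc R + T$ implies $R \semeq S^* \conc T$'', I would establish the two inclusions. The inclusion $S^* \conc T \subseteq R$ follows by a routine induction on $n$ showing $S^n \conc T \subseteq R$: the base case $T \subseteq R$ and the step, which unfolds $S^{n+1} \conc T = S \conc (S^n \conc T) \subseteq S \conc R \subseteq R$, both come directly from the hypothesis $R \semeq S \conc R + T$. The reverse inclusion $R \subseteq S^* \conc T$ is the crux, and it is here that $\varepsilon \notin S$ enters. I would argue by strong induction on the length $|w|$ of a word $w \in R$. Unfolding the hypothesis, either $w \in T$, whence $w \in S^* \conc T$ trivially, or $w = u \conc v$ with $u \in S$ and $v \in R$; since $\varepsilon \notin S$ forces $|u| \ge 1$ and hence $|v| < |w|$, the induction hypothesis gives $v \in S^* \conc T$, and prepending $u \in S$ keeps the result in $S^* \conc T$ because $u \conc S^* \subseteq S^*$.

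The main obstacle is precisely this last length-induction step: the side condition $\varepsilon \notin S$ is exactly what guarantees strict decrease of word length when peeling off a factor of $S$, and thus what singles out $S^* \conc T$ as the \emph{unique} solution of the equation. Without it the implication fails (for instance, $S \semeq \{\varepsilon\}$ admits solutions strictly larger than $T$), so I would take care to isolate exactly where nullability is invoked and confirm that the other inclusions go through unconditionally.
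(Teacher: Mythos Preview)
Your argument is correct and is the standard proof of Arden's Lemma. Note, however, that the paper does not actually supply a proof of this statement: it is stated with a citation to Arden's original paper, followed only by the remark that the right-to-left direction holds unconditionally, that the left-to-right direction requires $\varepsilon \notin S$, and that only the former direction is needed for the paper's development. Your proposal is fully consistent with those remarks---you pinpoint exactly the same dependence on the side condition---and simply goes further by spelling out the details the paper chooses to omit.
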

The direction from right to left holds in general.
For the direction from left to right, pre-condition $\varepsilon \not\in S$ is required.
For our purposes, we only require the direction from right to left.

\section{Solving Regular Equations}
\label{sec:solve-reg-eq}

\begin{defn}[Regular Equations]
\label{def:regular-equations}  
  We write $E$ to denote a \emph{regular equation}
  of the form $R \regEq \alpha$ where the form of the right-hand side $\alpha$ is as follows.
  \bda{lcll}
  \alpha && ::= & r \conc R \mid r \mid \alpha + \alpha
  \eda
  In addition to $\alpha$, we will sometimes use $\beta$ to denote right-hand sides.
  
  We will treat regular language symbols $R$ like variables.
  We write $r,s,t$ to denote expressions that do not refer to symbols $R$.

  We write $R \in \alpha$ to denote that $R$ appears in $\alpha$.
  Otherwise, we write $R \not\in \alpha$.
  
  We write ${\cal E}$ to denote a set $\{R_1 \regEq \alpha_1, \dots, R_n \regEq \alpha_n\}$ of regular equations.
  We assume that (1) left-hand sides are distinct by requiring that $R_i \not= R_j$ for $i \not= j$,
  and (2) regular language symbols on right-hand sides appear on some left-hand side
  by requiring that for any $R \in \alpha_j$ for some $j$ there exists $i$ such that $R = R_i$.
  We define ${\it dom}({\cal E}) = \{ R_1, \dots, R_n\}$.

\end{defn}

Regular languages are closed under union and concatenation, hence, we can guarantee
the existence of solutions of these variables in terms of regular expressions.

\begin{defn}[Solutions]
  We write $\{ R_1 \mapsto \gamma_1, \dots, R_n \mapsto \gamma_n \}$
  to denote an idempotent substitution mapping $R_i$ to $\gamma_i$
  where $\gamma_i$ denote expressions that may consist of a mix of regular expressions
  and regular language symbols $R$.

  Let $\psi = \{ R_1 \mapsto \gamma_1, \dots, R_n \mapsto \gamma_n \}$ be a substitution
  and $\gamma$ some expression.
  Then, $\psi(\gamma)$ is derived from $\gamma$ by replacing each occurrence of $R_i$ by $\gamma_i$.

  Let ${\cal E} = \{ R_1 \regEq \alpha_1, \dots, R_n \regEq \alpha_n \}$.
  Then, we say that $\psi$ is a \emph{solution} for ${\cal E}$ if
  $\psi(R_i)$, $\psi(\alpha_i)$ are regular expressions where
  $\psi(R_i) \semeq \psi(\alpha_i)$ for $i=1,\dots,n$.
\end{defn}

We solve equations as follows.
We apply Arden's Lemma on equations that are of a certain (normal) form
$R \regEq s \conc R + \alpha$
where $R \not\in \alpha$.
Thus, we can eliminate this equation
by substituting $R$ with $s^* \conc \alpha$ on all right-hand sides.
In case of $R \regEq \alpha$ where $R \not\in \alpha$ we can substitute directly.
We repeat this process until all equations are solved.
Below, we formalize the technical details.

\begin{defn}[Normal Form]
    We say that $R \regEq \alpha$ 
    is in \emph{normal form} iff either (1) $R \not\in \alpha$, or (2)
    $\alpha = s_1 \conc R_1 + \dots + s_n \conc R_n + t$ such that
    $R = R_1$ and $R_i \not=R_j$ for $i \not=j$.
\end{defn}
Recall that $t$ does not refer to symbols $R$.
Every equation can be brought into normal form by applying
the following algebraic equivalence laws.

\begin{defn}[Equivalence]
  We say two expressions $\gamma_1$ and $\gamma_2$ are \emph{equivalent},
    written $\gamma_1 \simeq \gamma_2$, if one  can be transformed into the
    other by application of the following rules.
    \bda{c}
    \tlabel{E1} \ \ \gamma_1 \conc (\gamma_2 + \gamma_3) \simeq \gamma_1 \conc \gamma_2 + \gamma_1 \conc \gamma_3
    \ \ \ \
    \tlabel{E2} \ \ \gamma_1 \conc (\gamma_2 \conc \gamma_3) \simeq (\gamma_1 \conc \gamma_2) \conc \gamma_3
    \\ \\
    \tlabel{E3} \ \ \gamma_1 + (\gamma_2 + \gamma_3) \simeq (\gamma_1 + \gamma_2) + \gamma_3
    \ \ \ \
    \tlabel{E4} \ \ \gamma_2 \conc \gamma_1 + \gamma_3 \conc \gamma_1 \simeq (\gamma_2 + \gamma_3) \conc \gamma_1
    \\
    \tlabel{E5} \ \ \gamma_1 + \gamma_2 \simeq \gamma_2 + \gamma_1
    \ \ \ \
    \tlabel{E6} \ \
    \myirule{\gamma_1 \simeq \gamma_2}
            {\beta[\gamma_1] \simeq \beta[\gamma_2]}
    \ \ \ \
    \tlabel{E7} \ \
    \myirule{\gamma_1 \simeq \gamma_2 \ \ \gamma_2 \simeq \gamma_3}
            {\gamma_1 \simeq \gamma_3}
   \eda
   Rule \tlabel{E6} assumes expressions with a hole.
    \bda{c}
      \beta[] ::= [] \mid \beta[] + \beta \mid \beta + \beta[]
    \eda
    We write $\beta[\gamma]$ to denote the expression where the hole $[]$
    is replaced by $\gamma$. 
\end{defn}

We formulate solving of equations in terms of a rewrite system
among a configuration $\Config{\psi}{\EE}$
where substitution $\psi$ represents the so far accumulated solution
and $\EE$ the yet to be solved set of equations.

\begin{defn}[Solving]
\label{def:solving}  
Let ${\cal E} = \{R_1 \regEq \alpha_1, \dots, R_n \regEq \alpha_n \}$.
Then, we write $R \regEq \alpha \uplus {\cal E}'$
to denote the set that equals to ${\cal E}$
where $R \regEq \alpha$ refers to some equation in ${\cal E}$
and ${\cal E}'$ refers to the set of remaining equations.
  
  \bda{c}
  \tlabel{Arden} \ \ \ \
  \myirule{R \not\in \alpha}
     {
   \solveRel{\Config{\psi}{ R \regEq s \conc R + \alpha \uplus {\cal E}}}
            {\Config{\psi}{ R \regEq s^* \conc \alpha \uplus {\cal E}}} }
  \\ \\
  \tlabel{Subst} \ \ \ \          
  \myirule{R \not\in \alpha
           \\ \psi'  =  \{ R \mapsto \alpha \} \cup \{ S \mapsto \{ R \mapsto \alpha \}(\gamma) \mid S \mapsto \gamma \in \psi \}
           \\ \EE'  =  \{ R' \regEq \alpha'' \mid R' \regEq \alpha' \in \EE \wedge
                                                     \{ R \mapsto \alpha \}(\alpha') \simeq \alpha'' \}
           }
          {\solveRel{\Config{\psi}{R \regEq \alpha \uplus {\cal E}}}
                    {\Config{\psi'}{\EE'}}
          }
    \eda
    We write $\Rightarrow^*$ to denote the transitive and reflexive closure of solving steps $\Rightarrow$.
\end{defn}
Initially, all equations are in normal form.
Rule \tlabel{Arden} applies Arden's Lemma on some equation in normal form.
Rule \tlabel{Subst} removes an equation
$R \regEq \alpha$ where $R \not\in \alpha$.
The substitution $\{ R \mapsto \alpha \}$ implied by the equation is applied
on all remaining right-hand sides.
To retain the normal form property of equations,
we normalize right-hand sides by applying rules \tlabel{E1-7}.
The details of normalization are described in the proof of the upcoming statement.
We then extend the solution
accumulated so far by adding $\{ R \mapsto \alpha \}$.
As we assume substitutions are idempotent,
$\{ R \mapsto \alpha \}$ is applied on all expressions
in the codomain of $\psi$.

\begin{thm}[Regular Equation Solutions]
  Let ${\cal E}$ be a set of regular equations in normal form.
  Then, $\solveRels{\Config{\{ \}}{{\cal E}}}{\Config{\psi}{\{ \}}}$ for some substitution $\psi$
    where $\psi$ is a solution for ${\cal E}$.
\end{thm}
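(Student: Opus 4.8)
The plan is to argue by a loop invariant on the rewrite relation $\Rightarrow$, splitting the claim into a \emph{progress/termination} part (some reduction sequence reaches a configuration $\Config{\psi}{\{\}}$ with empty equation set) and a \emph{soundness} part (the accumulated $\psi$ solves the original equations). Since the statement only asserts the existence of such a $\psi$, I would exhibit one particular reduction sequence rather than prove strong normalization. The central semantic invariant I would maintain is phrased in terms of solution sets: for a configuration $\Config{\psi}{\EE}$ reachable from $\Config{\{\}}{\EE_0}$ (writing $\EE_0$ for the original set), let $\mathit{Sol}(\psi,\EE)$ be the set of assignments $\theta$ of regular languages to $\mathrm{dom}(\EE_0)$ such that $\theta(R)\semeq\theta(\psi(R))$ for every $R$ bound by $\psi$ and $\theta(R)\semeq\theta(\alpha)$ for every $R\regEq\alpha\in\EE$. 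I would show that every rewrite step can only shrink this set, i.e.\ $\mathit{Sol}(\psi',\EE')\subseteq\mathit{Sol}(\psi,\EE)$, while at the start $\mathit{Sol}(\{\},\EE_0)$ is exactly the set of solutions of $\EE_0$.

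For progress and termination I would use the following strategy: while $\EE$ is non-empty, pick any equation $R\regEq\alpha$; because all equations are kept in normal form, either $R\not\in\alpha$ and \tlabel{Subst} applies immediately, or $\alpha=s_1\conc R+\dots$ with the self-referential summand placed first (normal form, case (2)), so that $R\regEq\alpha$ literally matches the premise of \tlabel{Arden}. One \tlabel{Arden} step then yields $R\regEq s_1^*\conc\alpha_0$ with $R\not\in\alpha_0$, after which \tlabel{Subst} applies. Each such round removes exactly one equation and makes the eliminated symbol $R$ disappear from the whole configuration (it occurs neither in $\alpha$ nor, after substitution, anywhere in $\psi'$ or $\EE'$), so after at most $n=|\EE_0|$ rounds, and at most $2n$ individual steps, we reach $\Config{\psi}{\{\}}$. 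Along the way all remaining symbols are eventually replaced, so the final $\psi$ maps every $R_i$ to a symbol-free regular expression.

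For soundness I would verify the invariant step by step. An \tlabel{Arden} step replaces $R\regEq s\conc R+\alpha$ by $R\regEq s^*\conc\alpha$; for any $\theta\in\mathit{Sol}(\psi',\EE')$ we have $\theta(R)\semeq\theta(s)^*\conc\theta(\alpha)$, and the right-to-left direction of Arden's Lemma (with $S=\theta(s)$, $T=\theta(\alpha)$) gives $\theta(R)\semeq\theta(s)\conc\theta(R)+\theta(\alpha)$, re-establishing the old equation; this direction holds unconditionally, so the side condition $\varepsilon\not\in s$ is not needed. A \tlabel{Subst} step records $R\mapsto\alpha$ and rewrites the other right-hand sides by $\{R\mapsto\alpha\}(\cdot)$ followed by $\simeq$-normalization; here I would invoke two auxiliary facts: (i) each law (E1)--(E7) is a valid language identity, so $\gamma_1\simeq\gamma_2$ implies $\theta(\gamma_1)\semeq\theta(\gamma_2)$, and (ii) $\semeq$ is a congruence under substitution, so $\theta(R)\semeq\theta(\alpha)$ implies $\theta(\{R\mapsto\alpha\}(\gamma))\semeq\theta(\gamma)$. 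Combining (i) and (ii) shows that any $\theta$ satisfying the substituted, normalized bindings and equations also satisfies the original ones, giving $\mathit{Sol}(\psi',\EE')\subseteq\mathit{Sol}(\psi,\EE)$. At termination $\EE=\{\}$ and $\psi$ is symbol-free, so $\mathit{Sol}(\psi,\{\})$ is the single assignment $R_i\mapsto\lang(\psi(R_i))$; by the chain of inclusions this assignment solves $\EE_0$, which unfolds to $\psi(R_i)\semeq\psi(\alpha_i)$ for all $i$, i.e.\ $\psi$ is a solution.

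The step I expect to be the main obstacle is the normalization embedded in \tlabel{Subst}: proving that $\{R\mapsto\alpha\}(\alpha')$ can always be brought back into normal form by (E1)--(E7). The work is to show that after substituting the sum $\alpha$ for the single occurrence $s_j\conc R$ inside $\alpha'$, repeated use of (E1) and (E2) distributes and re-associates every summand into the shape $(\text{symbol-free})\conc(\text{single symbol})$ or symbol-free; that (E5) and (E3) then reorder the summands, bringing any self-referential term to the front; and, crucially, that (E4) factors together any two summands sharing the same trailing symbol, so the distinctness requirement of the normal form is restored. I would also need the bookkeeping observation that this rewriting neither reintroduces $R$ nor leaves a product-of-sums, so the resulting $\alpha''$ genuinely satisfies the grammar and distinctness conditions of the definition of normal form. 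The remaining facts, namely soundness of (E1)--(E7) and congruence of $\semeq$, are routine structural inductions.
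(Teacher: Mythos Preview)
Your proposal is correct and tracks the paper's proof closely: both split into normal-form preservation under \tlabel{Subst} (which you rightly flag as the main technical work, and your sketch of how (E1)--(E7) restore the shape---distribute via (E1)/(E2), merge duplicate trailing symbols via (E4), reorder via (E3)/(E5)---matches the paper's argument step for step), termination, and soundness via the right-to-left direction of Arden's Lemma together with language-soundness of the equivalence rules. Your explicit solution-set invariant $\mathit{Sol}(\psi,\EE)$ is a cleaner packaging of what the paper states in one sentence.

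The one difference worth noting is termination. You exhibit a specific strategy (at most one \tlabel{Arden} then one \tlabel{Subst} per equation) and bound the sequence by $2n$ steps, which suffices for the existential claim. The paper instead defines a well-founded lexicographic order on configurations---first component the set of left-hand-side symbols, second the multiset of symbol occurrences on right-hand sides---and shows that every rule application strictly decreases it, thereby establishing that \emph{any} reduction sequence terminates. Your route is simpler and adequate here; the paper's buys strong normalization, which underpins its later discussion of how different solving orders all terminate yet may yield syntactically different (though semantically equivalent) solutions.
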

\begin{proof}
  We first observe that rule \tlabel{Arden}
  and \tlabel{Subst} maintain the normal form property for equations.
  This immediately applies to rule \tlabel{Arden}.

  Consider rule \tlabel{Subst}.
  Consider $R' \regEq \alpha'$.
  We need to show that $\{ R \mapsto \alpha \}(\alpha')$
  can be transformed to some form $\alpha''$
  such that $R' \regEq \alpha''$ is in normal form.

    If $R \not\in \alpha'$ nothing needs to be done as we assume that equations are initially in normal form.
    
  Otherwise, 
  we consider the possible shapes of $\alpha$ and $\alpha'$.
  W.l.o.g.~$\alpha'$ is of the form
  $t_1 \conc R_1 + \dots + r \conc R + \dots + t_n \conc R_n + t'$
  and $\alpha$ is of the form
  $s_1 \conc T_1 + \dots  + s_k \conc T_k + t''$.
  We rely here on rule \tlabel{E3} that allows us to drop parentheses
  among summands.
  
  $R$ is replaced by $\alpha$ in $\alpha'$.
  This generates the subterm
  $r \conc (s_1 \conc T_1 + \dots  + s_k \conc T_k + t'')$.
  On this subterm, we exhaustively apply rules \tlabel{E1-2}.
  This yields the subterm
  $(r \conc s_1) \conc T_1 + \dots + (r \conc s_k) \conc T_k + t''$.

  This subterm is one of the sums in the term obtained
  from $\{ R \mapsto \alpha \}(\alpha')$.
  Via rules \tlabel{E6-7} the above transformation steps
  can be applied on the entire term $\{ R \mapsto \alpha \}(\alpha')$.
  Hence, this term can be brought into the form
  $r_1 \conc S_1 + \dots + r_m \conc S_m + t$.
  Subterm $t$ equals $t' + t''$ and
  subterms $r_i \conc S_i$ refer to one of the subterms
  $t_j \conc R_j$ or $(r \conc s_l) \conc T_l$.

  We are not done yet because
  subterms $r_i \conc S_i$ may contain duplicate symbols.
  That is, $S_i = S_j$ for $i \not=j$.
  We apply rule \tlabel{E4} in combination with rule \tlabel{E3} and \tlabel{E4}
  to combine subterms with the same symbol.
  Thus, we reach the form
  $r_1' \conc R_1' + \dots + r_o' \conc R_o' + t$ such that $R_i \not=R_j$ for $i \not=j$.

  If $R' \not= R'_i$ for $i=1,\dots,o$ we are done.
  Otherwise, $R = R'_i$ for some $i$.
  We apply again \tlabel{E3} and \tlabel{E5}
  to ensure that the component $s_i \conc R_i$
  appears first in the sum.

  Next, we show that within a finite number of \tlabel{Arden}
  and \tlabel{Subst} rule applications we reach the
  configuration $\Config{\psi}{\{ \}}$.
  For this purpose, we define an ordering relation
  among configurations $\Config{\psi}{\EE}$.
  
  For $\EE = \{R_1 \regEq \alpha_1, \dots, R_n \regEq \alpha_n \}$
  we define
  $$\varsEE{\EE} = (\{R_1,\dots,R_n\}, \multiset{S_1,\dots,S_m})
  $$
  where $\multiset{\dots}$ denotes a multi-set and
  $S_j$ are the distinct occurrences of symbols appearing
  on some right-hand side $\alpha_i$.
  Recall that by construction $\{S_1,\dots,S_m\} \subseteq \{R_1,\dots,R_n\}$.
  See (2) in Definition~\ref{def:regular-equations}.
  We define $\Config{\psi}{\EE} < \Config{\psi'}{\EE'}$
  iff either
  (a) $M \subsetneq M'$
  or
  (b) $M = M'$ and the number of symbols in $N$ is strictly smaller
  than the number of symbols in $N'$
  where $\varsEE{\EE} = (M,N)$ and $\varsEE{\EE'} = (M',N')$.

  For sets $\EE$ of regular equations as defined in Definition~\ref{def:regular-equations} this is a well-founded order.
  Each of the rules \tlabel{Subst}
  and \tlabel{Arden} yield a smaller configuration w.r.t~this order.
  For rule \tlabel{Subst} case (a) applies whereas for rule \tlabel{Arden}
  case (b) applies.
  Configuration $\Config{\psi}{\{\}}$ for some $\psi$ is the minimal element.
  Hence, in a finite number of rule applications
  we reach $\Config{\psi}{\{\}}$.
  
  Substitution $\psi$ must be a solution because
  (1) normalization steps   are equivalence preserving
  and (2) based on Arden's Lemma
  we have that every solution for $R \regEq s^* \conc \alpha$
  is also a solution for $R \regEq s \conc R + \alpha$.
  \qed
\end{proof}

\begin{example}
\label{ex:solve1}
Consider
${\cal E} = \{ R_1 \regEq x \conc R_1 + y \conc R_2 + \varepsilon,
R_2 \regEq y \conc R_1 + x \conc R_2 + \varepsilon \}$.
For convenience, we additionally make use of  associativity of concatenation ($\conc$).
\bda{ll}
& \Config{\{ \} }
         {\{ R_1 \regEq x \conc R_1 + y \conc R_2 + \varepsilon,
             R_2 \regEq y \conc R_1 + x \conc R_2 + \varepsilon \}}
\\
\stackrel{\tlabel{Arden}}{\Rightarrow} &
\Config{\{ \} }
         {\{ R_1 \regEq x^* \conc (y \conc R_2 + \varepsilon),
           R_2 \regEq y \conc R_1 + x \conc R_2 + \varepsilon \}}
\\
\stackrel{\tlabel{Subst}}{\Rightarrow} &
( y \conc (x^* \conc (y \conc R_2 + \varepsilon)) + x \conc R_2 + \varepsilon \simeq
(y \conc x^* \conc y + x) \conc R_2 + y \conc x^* \conc \varepsilon + \varepsilon )
\\ &
\Config{\{ R_1 \mapsto x^* \conc (y \conc R_2 + \varepsilon) \}}
       {\{ R_2 \regEq (y \conc x^* \conc y + x) \conc R_2 + y \conc x^* \conc \varepsilon + \varepsilon  \}}
\\
\stackrel{\tlabel{Arden}}{\Rightarrow} &
\Config{\{ R_1 \mapsto x^* \conc (y \conc R_2 + \varepsilon) \}}
       {\{ R_2 \regEq (y \conc x^* \conc y + x)^* \conc (y \conc x^* \conc \varepsilon + \varepsilon)  \}}
\\
\stackrel{\tlabel{Subst}}{\Rightarrow} &       
\Config{\{ R_1 \mapsto x^* \conc (y \conc (y \conc x^* \conc y + x)^* \conc (y \conc x^* \conc \varepsilon + \varepsilon) + \varepsilon),
         \\ & R_2 \mapsto (y \conc x^* \conc y + x)^* \conc (y \conc x^* \conc \varepsilon + \varepsilon) \} }
       {\{   \}}
\eda

The formulation in Definition~\ref{def:solving}
leaves the exact order in which equations are solved unspecified.
Semantically, this form of non-determinism has no impact
on the solution obtained.
However, the syntactic shape of solutions is sensitive
to the order in which equations are solved.

Suppose we favor the second equation which then yields
the following.
\bda{ll}
& \Config{\{ \} }
         {\{ R_1 \regEq x \conc R_1 + y \conc R_2 + \varepsilon,
             R_2 \regEq y \conc R_1 + x \conc R_2 + \varepsilon \}}
\\
\Rightarrow^* 
& \Config{\{ R_1 \mapsto (x + y \conc x^* \conc y)^* + y \conc x^* + \varepsilon,
       \\ & R_2 \mapsto x^* \conc (y \conc ((x + y \conc x^* \conc y)^* + y \conc x^* + \varepsilon) + \varepsilon)\}}
         {\{ \}}
\eda
where for convenience, we exploit the law $r \conc \varepsilon \semeq r$.
\end{example}


\section{Computational Interpretation}
\label{sec:comput-inter}

We characterize under which conditions solutions to regular equations are unambiguous.
By unambiguous solutions we mean that the resulting expressions are unambiguous.
An  expression is ambiguous if there exists
a word which can be matched in more than one way.
That is, there must be two distinct parse trees
which share the same underlying word~\cite{DBLP:conf/ppdp/BrabrandT10}.

We proceed by establishing the notion of a parse tree.
Parse trees capture the word that has been matched and also record
which parts of the regular expression have been matched.
We follow~\cite{cduce-icalp04} and view expressions as types and parse trees as values.

\begin{defn}[Parse Trees]
\label{def:parse-trees}  
\bda{c}
 u,v \ ::= \ \Eps \mid \Sym\ x \mid \Seq v v
 \mid  \Left~v \mid \Right~v
    \mid  vs \mid \Fold\ v
 \ \ \ \ 
vs \ ::= \ [] \mid v : vs
\eda
The valid relations among parse trees and regular expressions
are defined via a natural deduction style
proof system.
\bda{c}
\EE \turns [] : r^*
 \ \ \ \        
 \EE \turns \Eps : \epsilon
 \ \ \ \
 \myirule{x \in \Sigma}
         { \EE \turns \Sym\ x : x }        
\\
\myirule{\EE \turns v : r
         \sgap \EE \turns vs : r^*}
        {\EE \turns (v:vs) : r^*}
 \ \ \ \
\myirule{ \EE \turns v_1 : r_1 \sgap \EE \turns v_2 : r_2 }
        {\EE \turns \Seq{v_1}{v_2} : r_1 \conc r_2 }
\\ 
 \myirule{\EE \turns v_1 : r_1}
         {\EE \turns \Left~v_1 : r_1 + r_2}
 \ \ \ \
 \myirule{\EE \turns v_2 : r_2}
         {\EE \turns \Right~v_2 : r_1 + r_2}
\ \ \ \
  \myirule{ \EE \turns v : \alpha \ \ \ \ R \regEq \alpha \in \EE }
         { \EE \turns \Fold \ v : R}         
\eda
For expressions not referring to variables
we write $\turns v : r$ as a shorthand for $\{ \} \turns v : r$.
\end{defn}
Parse tree values are built using data constructors. 
The constant constructor $\Eps$ represents the value
belonging to the empty word regular expression.
For letters, we use the unary constructor $\Sym$ to record the symbol.
In case of choice ($+$), we use unary constructors $\Left$ and $\Right$
to indicate if either the left or right expression is part of the match.
For repetition (Kleene star) we use Haskell style lists where  
we write $[v_1,...,v_n]$ as a short-hand
for the list $v_1 : ... : v_n : []$.
In addition to the earlier work~\cite{cduce-icalp04},
we introduce a  $\Fold$  constructor and a proof rule to (un)fold a regular equation.

\begin{example}
\label{ex:fold-parse-tree}  
  Consider $\EE = \{ R \regEq  x \conc R + y \}$.
  Then, we find that
  $$\EE \turns \Fold\ (\Left\ (\Seq\ (\Sym\ x) \ (\Fold\ (\Right\ (\Sym\ y))))) : R
  $$
  The equation is unfolded twice
  where we first match against the left part $x \conc R$
  and then against the right part $y$.
\end{example}  

The relation established in Definition~\ref{def:parse-trees}
among parse trees, expressions and equations is correct
in the sense that (1) flattening of the parse tree yields
a word in the language and (2) for each word there exists
a parse tree.

\begin{defn}[Flattening]
  We can flatten a parse tree to a word as follows:
\bda{lllllllllllllllll}
 \flatten{\Eps}  &=&  \epsilon &
  \flatten{\Sym\ x} &=& x          &
  \flatten{\Left~v}  &=&  \flatten{v} ~~~~&
  \flatten{v:vs}  &=&  \flatten{v} \conc \flatten{vs}
\\
 \flatten{[]} & =&  \epsilon  ~~~&
 \flatten{\Seq{v_1}{v_2}} &=& \flatten{v_1} \conc \flatten{v_2} ~~~&
 \flatten{\Right~v} &=&  \flatten{v}  &
 \flatten{\Fold\ v} &=&  \flatten{v}
\eda
\end{defn}

\begin{prop}
  Let $\EE$ be a set of regular equations and $\psi$ a solution.
  Let $R \in {\it dom}(\EE)$.
  (1) If $w \in \lang(\psi(R))$ then $\EE \turns v : R$
  for some parse tree $v$ such that $\flatten{v} = w$.
  (2) If $\EE \turns v : R$ then $\flatten{v} \in \lang(\psi(R))$.
\end{prop}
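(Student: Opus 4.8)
The plan is to read the parse-tree relation as a characterisation of the language of the solution and to prove the two directions by separate inductions. For the \emph{soundness} direction (2) I would generalise the statement to arbitrary right-hand sides: for every expression $\gamma$ (possibly mentioning the symbols $R_i$), if $\EE \turns v : \gamma$ then $\flatten{v} \in \lang(\psi(\gamma))$, where $\psi(\gamma)$ denotes the variable-free regular expression obtained by replacing every $R_i$ occurring in $\gamma$ by $\psi(R_i)$. Part (2) is then the instance $\gamma = R$.

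I would prove this generalised statement by structural induction on the derivation of $\EE \turns v : \gamma$. Every non-$\Fold$ rule matches exactly one defining clause of $\lang$, so the induction hypotheses on the immediate sub-derivations combine mechanically; for instance the $\Seq$ rule gives $\flatten{\Seq{v_1}{v_2}} = \flatten{v_1} \conc \flatten{v_2} \in \lang(\psi(r_1)) \conc \lang(\psi(r_2)) = \lang(\psi(r_1 \conc r_2))$, and the two list rules use $\varepsilon \in \lang(\psi(r))^{*}$ together with closure of $\lang(\psi(r))^{*}$ under prepending one factor. The only case that uses the assumption on $\psi$ is $\Fold$: from $\EE \turns v : \alpha$ with $R \regEq \alpha \in \EE$ the induction hypothesis yields $\flatten{v} \in \lang(\psi(\alpha))$, and since $\psi$ is a solution we have $\psi(R) \semeq \psi(\alpha)$, whence $\flatten{\Fold\ v} = \flatten{v} \in \lang(\psi(\alpha)) = \lang(\psi(R))$.

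For the \emph{completeness} direction (1) I would argue by strong induction on the word length $|w|$, after first recording the routine pure sub-lemma that for a variable-free $r$ and $w \in \lang(r)$ there is a $v$ with $\turns v : r$ and $\flatten{v} = w$ (structural induction on $r$), which transfers to $\EE \turns v : r$ because the non-$\Fold$ rules ignore $\EE$. Given $w \in \lang(\psi(R))$ with $R \regEq \alpha \in \EE$, the solution property gives $w \in \lang(\psi(\alpha))$. By the grammar of right-hand sides, $\alpha$ is a sum whose summands are each either a pure expression $r$ or a product $r \conc R'$ with $r$ variable-free, and $w$ lies in $\lang(\psi(\cdot))$ of one such summand. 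If that summand is a pure $r$, the sub-lemma supplies its parse tree; if it is $r \conc R'$, I split $w = w_1 \conc w_2$ with $w_1 \in \lang(r)$ and $w_2 \in \lang(\psi(R'))$, use the sub-lemma for $w_1$, the induction hypothesis for $w_2$, and combine with $\Seq$. In either case I navigate to the chosen summand through the $+$-structure with $\Left$ and $\Right$ and finally apply $\Fold$, obtaining $\EE \turns \Fold\ v' : R$ with $\flatten{\Fold\ v'} = w$.

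The \emph{main obstacle} is the termination of this recursion at the product case $r \conc R'$: the induction hypothesis is available only when $w_2$ is strictly shorter than $w$, i.e.\ when the consumed prefix $w_1$ is nonempty. This holds precisely when the coefficients standing in front of recursive occurrences are non-nullable ($\varepsilon \notin \lang(r)$), which is automatic for the equation systems of interest here, where coefficients are single input symbols. Without such guardedness the recursion can cycle through $\varepsilon$-matched occurrences, and the statement can genuinely fail for an arbitrary solution: for $\EE = \{R \regEq \varepsilon \conc R\}$ the map $\psi(R) = x^{*}$ is a solution in the sense of the definition, yet no word in $\lang(x^{*})$ admits a finite parse tree built from $\EE$. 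I would therefore either state (1) under the guardedness that the source equations enjoy, or read $\psi$ as the solution actually produced by the solving procedure (the least solution), for which the $\varepsilon$-cycles contribute nothing and the word-length induction goes through; this is the single point where more than routine bookkeeping is needed.
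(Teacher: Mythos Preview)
Your proposal is correct and more careful than the paper's own treatment. The paper does not give a direct inductive argument at all; it simply says the result ``follows by providing a parser for regular equations'' and defers to an appendix where Brzozowski-style derivatives are extended by the clause $\deriv{R}{x} = \deriv{\alpha}{x}$ for $R \regEq \alpha \in \EE$, together with an injection function that rebuilds parse trees. Soundness and completeness are then inherited from prior work on derivative-based parsing. Your route---structural induction on derivations for~(2), strong induction on $|w|$ for~(1)---is a genuinely different, more elementary decomposition that avoids the parser machinery entirely.

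You also put your finger on something the paper glosses over. Your counterexample $\EE = \{R \regEq \varepsilon \conc R\}$ with $\psi(R) = x^{*}$ is a valid solution in the sense of the paper's Definition (Solutions), yet admits no finite parse tree, so direction~(1) fails for an \emph{arbitrary} solution of an unguarded system. The paper's parser-based argument has exactly the same hidden assumption: on that same $\EE$ the extended derivative $\deriv{R}{x}$ unfolds to $\phi \conc R + \deriv{R}{x}$ and diverges, so the appendix construction does not terminate either. The paper never states a guardedness hypothesis; it is tacitly relying on the fact that all the equation systems it actually uses (characteristic equations of a DFA, the subtraction and shuffle constructions) have single letters as coefficients. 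Your explicit identification of this side condition, and the two ways to discharge it (guardedness, or restricting to the least solution produced by the solver), is an improvement over the paper's presentation.
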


The above result follows by providing a parser for regular equations.
For (1) it suffices to
compute a parse tree if one exists.
For (2) we need to enumerate all possible parse trees.
This is possible by extending our prior work~\cite{DBLP:conf/flops/SulzmannL14,DBLP:journals/ijfcs/SulzmannL17}
to the regular equation setting.
Details are given in Appendix~\ref{sec:parse-reg-eq}.

Parse trees may not be unique because some equations/expressions
may be ambiguous in the sense that a word can be matched in more than one way.
This means that there are two distinct parse trees representing the same word.
We extend the notion of ambiguous expressions~\cite{DBLP:conf/ppdp/BrabrandT10}
to the setting of regular equations.

\begin{defn}[Ambiguity]
Let $\EE$ be a set of regular equations and $r$ be an expression.  
We say $r$ is \emph{ambiguous} w.r.t.~$\EE$ iff
there exist two distinct parse trees $v_1$ and $v_2$ such
that $\EE \turns v_1 : r$ and $\EE \turns v_2 : r$ 
where $\flatten{v_1} = \flatten{v_2}$.
\end{defn}

\begin{example}
  $[\Left \ (\Seq{(\Sym\ x)}{(\Sym\ y)})]$ and $[\Right \ (\Left\ (\Sym\ x)), \Right \ (\Right\ (\Sym\ y))]$
are two distinct parse trees for expression $(x\conc y + x + y)^*$ (where $\EE = \{ \}$)
and word $x \conc y$.
\end{example}

On the other hand, the equation from Example~\ref{ex:fold-parse-tree} is unambiguous
due to the following result.

\begin{defn}[Non-Overlapping Equations]
  We say an equation $E$ is \emph{non-overlapping} if $E$ is
  of the following form $R \regEq x_1 \conc R_1 + \dots + x_n \conc R_n + t$
  where $x_i \not= x_j$ for $i \not= j$ and either $t=\varepsilon$ or $t = \phi$.
\end{defn}

Equation $R \regEq  x \conc R + y$ does not exactly match the above definition.
However, we can transform  $\EE = \{ R \regEq  x \conc R + y \}$
into the equivalent set $\EE' = \{ R \regEq  x \conc R + y \conc S, S \regEq \varepsilon \}$
that satisfies the non-overlapping condition.

\begin{prop}[Unambiguous Regular Equations]
\label{prop:regeq-unambig}  
  Let ${\cal E}$ be a set of non-overlapping equations where $R \in {\it dom}({\cal E})$.
  Then, we have that $R$ is unambiguous.
\end{prop}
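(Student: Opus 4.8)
The plan is to prove, by strong induction on the length of the matched word, the following statement that is slightly more general than the proposition: for every word $w$ and every $S \in {\it dom}(\EE)$, there is at most one parse tree $v$ with $\EE \turns v : S$ and $\flatten{v} = w$. Unambiguity of $R$ is then the special case $S = R$. The whole argument turns on a single structural fact about non-overlapping equations: the first symbol of $w$ (or the emptiness of $w$) uniquely determines which summand of the right-hand side is matched.

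First I would exploit the rigid shape of parse trees forced by the proof system of Definition~\ref{def:parse-trees}. Suppose $\EE \turns v_1 : S$ and $\EE \turns v_2 : S$ with $\flatten{v_1} = \flatten{v_2} = w$. Since the only rule concluding a variable type is the $\Fold$ rule, we have $v_1 = \Fold\ v_1'$ and $v_2 = \Fold\ v_2'$ with $\EE \turns v_j' : \alpha$, where $S \regEq \alpha \in \EE$ and, by the non-overlapping hypothesis, $\alpha = x_1 \conc R_1 + \dots + x_n \conc R_n + t$ with $t \in \{\varepsilon, \phi\}$ and the $x_i$ pairwise distinct. Because $+$ is right-associative and the only rules concluding a sum type produce $\Left$ or $\Right$, each $v_j'$ is a chain of $\Right$ injections selecting exactly one summand of $\alpha$: either some $x_i \conc R_i$ (reached by $i-1$ applications of $\Right$ followed by one $\Left$) or the tail $t$ (reached by $n$ applications of $\Right$).

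The heart of the proof is showing that $v_1'$ and $v_2'$ select the same summand, and this is the only place the non-overlapping conditions are used. Any parse tree of $x_i \conc R_i$ has, by the $\Seq$ rule together with the fact that the only parse tree of the symbol $x_i$ is $\Sym\ x_i$, the form $\Seq{(\Sym\ x_i)}{v''}$ with $\EE \turns v'' : R_i$; hence it flattens to a word beginning with $x_i$. The tail $t$ is either $\phi$, which has no parse tree, or $\varepsilon$, whose only parse tree is $\Eps$, flattening to $\varepsilon$. Thus if $w = \varepsilon$, neither $v_1'$ nor $v_2'$ can select any $x_i \conc R_i$, so both select $t$ (forcing $t = \varepsilon$) and both equal the same $\Right$-chain ending in $\Eps$, giving $v_1' = v_2'$. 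If $w = x \conc w'$ is nonempty, the selected summand must flatten to a word beginning with $x$, ruling out $t$; among the $x_i \conc R_i$ only the one with $x_i = x$ qualifies, and distinctness of the $x_i$ makes this choice unique. Hence $v_1'$ and $v_2'$ select the same summand $x_i \conc R_i$, so $v_j' = \Right^{i-1}(\Left\ (\Seq{(\Sym\ x_i)}{w_j}))$ with $\EE \turns w_j : R_i$ and $\flatten{w_j} = w'$.

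Finally I would close the induction. In the nonempty case $|w'| < |w|$ and $R_i \in {\it dom}(\EE)$ by condition~(2) of Definition~\ref{def:regular-equations}, so the induction hypothesis yields $w_1 = w_2$, whence $v_1' = v_2'$ and therefore $v_1 = v_2$; the base case $w = \varepsilon$ was already settled directly. I expect the main obstacle to be organizational rather than deep: one must argue cleanly that the parse-tree skeleton ($\Fold$, then an injection chain, then a $\Seq$ with a leading $\Sym$) is completely pinned down by the syntactic form of $\alpha$, and that recursion through the symbols $R_i$ always consumes at least the leading letter $x_i$. The latter is exactly what makes the induction on $|w|$ go through despite parse trees being of unbounded size, and the former, combined with distinctness of the $x_i$ and $t \in \{\varepsilon,\phi\}$, is what guarantees that at most one summand can ever match a given word.
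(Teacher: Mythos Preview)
Your proposal is correct and follows essentially the same approach as the paper's proof: a case distinction on whether the flattened word is empty or starts with some symbol $x$, using the non-overlapping conditions to pin down the unique matching summand. Your version is more rigorous in that you make the induction on word length explicit and carefully justify why the sub-parse-tree for $R_i$ is also unique, whereas the paper's proof only sketches this step.
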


Ultimately, we are interested in obtaining a parse tree for the resulting solutions
rather than the original set of equations.
For instance, the solution for Example~\ref{ex:fold-parse-tree} is $x^* \conc y$.
Hence, we wish to transform the parse tree
$$\Fold\ (\Left\ (\Seq\ (\Sym\ x) \ (\Fold\ (\Right\ (\Sym\ y)))))
$$
into a parse tree for $x^* \conc y$.
Furthermore, we wish to guarantee that if equations are unambiguous so are solutions.
We achieve both results by explaining each solving step among regular equations
in terms of a (bijective) transformation among the associated parse trees.

We refer to these transformations as coercions as they operate on parse trees.
We assume the following term language to represent coercions.

\begin{defn}[Coercion Terms]
  Coercion terms $c$ and
  patterns $pat$ are inductively defined by
  \bda{lrl}
  c & ::= & v \mid k \mid \lambda v.c \mid c \ c \mid \REC\ x.c
  \mid \CASE\ c \ \OF\ [pat_1 \Rightarrow c_1, \ldots, pat_n \Rightarrow c_n]
  \\
  pat & ::= & y \mid k \ pat_1 \ ... pat_{arity(k)}
  \eda
  where pattern variables $y$ range overs a denumerable set of variables disjoint from $\Sigma$
  and constructors $k$ are taken from the set
  $\Kons = \{ \Eps, \SeqOp, \Left, \Right, \Fold \}$.
  The function~$arity(k)$ defines the arity of constructor $k$.
  Patterns are linear (i.e., all pattern variables are distinct) and
  we write $\lambda pat.c$ as a shorthand for
  $\lambda v. \CASE\ v \ \OF\ [pat \Rightarrow c]$.
\end{defn}

We give meaning to coercions in terms of a standard big-step operational semantics.
Given a coercion (function) $f$ and some (parse tree) value $u$,
we write $f \ u \Downarrow v$ to denote the evaluation of $f$ for input $u$
with resulting (parse tree) value $v$.
We often write $f (u)$ as a shorthand for $v$.
We say a coercion $f$ is \emph{bijective} if there exists a coercion $g$
such that for every $u, v$ where $f \ u \Downarrow v$ we have
that $g \ v \Downarrow u$. We refer to $g$ as the \emph{inverse} of $f$.

We examine the three elementary solving steps, Arden, normalization and substitution.
For each solving step we introduce an appropriate (bijective) coercion to carry out the transformation
among parse trees.

\begin{lem}[Arden Coercion]
\label{le:arden-coercion}  
  Let $\EE$ be a set of regular equations where $R \regEq s \conc R + \alpha \in \EE$
  such that $R \not\in \alpha$
  and $\EE \turns \Fold\ v : R$ for some parse tree $v$.
  Then, there exists a bijective coercion $f_A$ such that
  $\EE \turns f_A (v) : s^* \conc \alpha$ where $\flatten{v} = \flatten{f_A (v)}$.
\end{lem}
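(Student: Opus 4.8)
The plan is to define $f_A$ as a recursive coercion that repeatedly unfolds the $s \conc R$ component and collects the matches of $s$ into a list, stopping when it reaches the $\alpha$ component. First I would invert the typing judgement. Since $\EE \turns \Fold\ v : R$ and $R \regEq s \conc R + \alpha \in \EE$, the fold rule forces $\EE \turns v : s \conc R + \alpha$. By inversion on the choice rule either $v = \Right\ v_\alpha$ with $\EE \turns v_\alpha : \alpha$, or $v = \Left\ (\Seq{v_s}{v_R})$ with $\EE \turns v_s : s$ and $\EE \turns v_R : R$; in the latter case inversion on the fold rule yields $v_R = \Fold\ v'$ with $\EE \turns v' : s \conc R + \alpha$, so the same analysis applies recursively to $v'$. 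Note that $s$ is $R$-free by the grammar of right-hand sides, so $v_s$ contains no further $R$-unfoldings, and $v_\alpha$ is carried along untouched.

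Guided by this, I would define
\[
\begin{array}{l}
f_A = \REC\ g.\ \lambda v.\ \CASE\ v \ \OF \\
\quad [\ \Right\ y_\alpha \Rightarrow \Seq{[]}{y_\alpha}, \\
\quad\ \ \Left\ (\Seq{y_s}{(\Fold\ y')}) \Rightarrow \CASE\ (g\ y')\ \OF\ [\Seq{ys}{y_\alpha} \Rightarrow \Seq{(y_s : ys)}{y_\alpha}]\ ]
\end{array}
\]
so that a chain of $n$ unfoldings is mapped to $\Seq{[v_1,\dots,v_n]}{v_\alpha}$. The two correctness claims, $\EE \turns f_A(v) : s^* \conc \alpha$ and $\flatten{v} = \flatten{f_A(v)}$, I would then establish simultaneously by induction on the size of $v$ (the environment $\EE$ stays fixed throughout, as $f_A$ is purely structural). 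In the base case $\Right\ v_\alpha$, the output $\Seq{[]}{v_\alpha}$ types against $s^* \conc \alpha$ via the rules for $[] : s^*$ and $v_\alpha : \alpha$, and both sides flatten to $\flatten{v_\alpha}$. In the step case, the induction hypothesis on the smaller $v'$ gives $g(v') = \Seq{vs}{v_\alpha}$ with $\EE \turns vs : s^*$ and $\EE \turns v_\alpha : \alpha$; the list rule then gives $\EE \turns (v_s : vs) : s^*$ and hence the required typing of $\Seq{(v_s : vs)}{v_\alpha}$. Flattening is preserved because $\flatten{\cdot}$ is transparent to $\Fold$, $\Left$, $\Right$ and distributes over $\Seq$ and $:$, so both sides reduce to $\flatten{v_s} \conc \flatten{v'}$.

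For bijectivity I would exhibit the inverse $g_A$ that reverses the process, mapping $\Seq{[]}{v_\alpha}$ to $\Right\ v_\alpha$ and $\Seq{(v_s : vs)}{v_\alpha}$ to $\Left\ (\Seq{v_s}{\Fold\ (g_A(\Seq{vs}{v_\alpha}))})$, again definable as a $\REC$-coercion. A routine induction then shows $g_A(f_A(v)) = v$ and $f_A(g_A(u)) = u$, so $f_A$ is bijective with inverse $g_A$ in the sense of the definition.

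The one point that needs genuine care is the well-foundedness of the recursion: a priori $R$ could be unfolded indefinitely, but since parse trees are finite terms, every $v : s \conc R + \alpha$ contains only finitely many nested $\Fold$-unfoldings before reaching the $\Right$ branch. Making this precise by taking the size of $v$ (equivalently its $\Fold$-nesting depth) as the induction measure and checking that it strictly decreases from $v$ to $v'$ is the main obstacle, and it is also exactly what guarantees that the $\REC$-coercion terminates on all well-typed inputs.
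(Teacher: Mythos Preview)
Your proposal is correct and follows the same approach as the paper: the recursive coercion $f_A$ you define is identical to the paper's, and your inverse $g_A$ matches the paper's $f_A^{-1}$ (modulo a $\Fold$ wrapper that the paper adds at the outermost level). You supply more detail than the paper does---the explicit induction on the size of $v$ for typing, flattening, and termination---whereas the paper simply states the two coercions and asserts that they satisfy the required properties.
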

\begin{proof}
  By assumption $\EE \turns v : s \conc R + \alpha$.
  The following function $f_A$ satisfies $\EE \turns f_A (v) : s^* \conc \alpha$ where $\flatten{v} = \flatten{f_A (v)}$.
  For convenience we use symbols $v$ and $u$ as pattern variables.
\bda{llll}
f_A = & \REC\ f. \lambda x. & \CASE\ x \ \OF\
\\   && [ \Right\ u \Rightarrow \Seq{[]}{u},
  \\ &&    \Left\ (\Seq{u}{(\Fold\ v)} \Rightarrow & \CASE\ f(v) \ \OF
  \\ &&&          [\Seq{us}{u_2} \Rightarrow \Seq{(u:us)}{u_2}] ]
\eda
Function $f_A$ is bijective. Here is the inverse function.
\bda{llll}
f_A^{-1} = & \REC\ g. \lambda x. & \CASE\ x \ \OF
\\ && [\Seq{[]}{u} \Rightarrow \Fold\ (\Right\ u),
\\ &&  \Seq{(v:vs)}{u} \Rightarrow \Fold\ (\Left \ (\Seq{v}{(g \ (\Seq{vs}{u}))}))]
\eda
\qed
\end{proof}

\begin{lem}[Normalization Coercion]
\label{le:normalize-coercion}  
  Let $\gamma_1, \gamma_2$ be two expressions such that $\gamma_1 \simeq \gamma_2$
  and $\EE \turns v : \gamma_1$ for some set $\EE$ and parse tree $v$.
  Then, there exists a bijective coercion f such that
  $\EE \turns f (v) : \gamma_2$ where $\flatten{v} = \flatten{f (v)}$.
\end{lem}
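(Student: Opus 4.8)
The plan is to induct on the derivation of $\gamma_1 \simeq \gamma_2$, building the coercion rule by rule from the seven equivalence axioms (E1)--(E7). Since $\simeq$ is generated by these rules, it suffices to exhibit a bijective coercion for each atomic axiom and then show that bijectivity is closed under the two structural rules (E6) and (E7). For the reflexive/base cases and each directed use of an axiom, I would give the coercion explicitly as a $\CASE$ expression that re-associates, distributes, or commutes the constructors $\Seq{\cdot}{\cdot}$, $\Left$, and $\Right$ in the parse tree, matching the syntactic rearrangement performed on the expression.

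First I would handle each algebraic law in isolation. For (E2), associativity of $\conc$, the coercion maps $\Seq{v_1}{(\Seq{v_2}{v_3})}$ to $\Seq{(\Seq{v_1}{v_2})}{v_3}$ and is visibly bijective with the obvious inverse. For (E3), associativity of $+$, the coercion is a four-way $\CASE$ that reshuffles nested $\Left$/$\Right$ tags (e.g.\ $\Left\ v \mapsto \Left\ (\Left\ v)$, $\Right\ (\Left\ v) \mapsto \Left\ (\Right\ v)$, $\Right\ (\Right\ v) \mapsto \Right\ v$). For (E5), commutativity of $+$, the coercion swaps $\Left\ v \leftrightarrow \Right\ v$ and is its own inverse. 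The distributivity law (E1), $\gamma_1 \conc (\gamma_2 + \gamma_3) \simeq \gamma_1 \conc \gamma_2 + \gamma_1 \conc \gamma_3$, maps $\Seq{v}{(\Left\ u)} \mapsto \Left\ (\Seq{v}{u})$ and $\Seq{v}{(\Right\ u)} \mapsto \Right\ (\Seq{v}{u})$; law (E4) is symmetric. In every case the coercion only permutes constructors around the same underlying leaves, so $\flatten{f(v)} = \flatten{v}$ is immediate by the definition of flattening, and the inverse coercion is obtained by reading each $\CASE$ branch backwards.

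Next I would handle the structural rules. Rule (E7), transitivity, is the composition case: given bijective coercions $f_1$ witnessing $\gamma_1 \simeq \gamma_2$ and $f_2$ witnessing $\gamma_2 \simeq \gamma_3$, the composite $\lambda x.\, f_2\ (f_1\ x)$ witnesses $\gamma_1 \simeq \gamma_3$, its inverse is $f_1^{-1} \circ f_2^{-1}$, and flattening is preserved because each factor preserves it. Rule (E6), congruence under a $+$-context $\beta[]$, requires lifting a coercion for the hole into a coercion for the whole term. The key observation is that the hole contexts $\beta[]$ built by $[] \mid \beta[] + \beta \mid \beta + \beta[]$ correspond exactly to chains of $\Left$/$\Right$ constructors guarding the subtree at the hole; I would define, by induction on the structure of $\beta[]$, a lifted coercion that pattern-matches down through these $\Left$/$\Right$ tags, applies the inner coercion $f$ at the hole, and rebuilds the surrounding tags unchanged. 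Bijectivity and flattening-preservation of the lift follow from those of $f$, since the surrounding tags are left intact.

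The main obstacle I anticipate is the congruence rule (E6): making the lift precise requires an induction on the context $\beta[]$ while simultaneously carrying along the type refinement (the inner derivation $\EE \turns v' : \gamma_1$ at the hole must be extracted from the derivation of $\EE \turns v : \beta[\gamma_1]$, and the rebuilt tree must be re-typed against $\beta[\gamma_2]$). I would manage this by observing that a derivation of $\EE \turns v : \beta[\gamma_1]$ necessarily ends in a chain of the $\Left$/$\Right$ typing rules matching the shape of $\beta[]$, which lets me read off the sub-derivation at the hole and reattach the transformed one. The remaining bookkeeping—checking that each atomic coercion's output is well-typed against the target expression via the typing rules of Definition~\ref{def:parse-trees}—is routine case analysis that mirrors the expression-level rewriting, so I would state it and omit the mechanical verification.
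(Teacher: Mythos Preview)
Your proposal is correct and follows essentially the same approach as the paper's proof: both proceed by induction on the derivation of $\gamma_1 \simeq \gamma_2$, giving explicit bijective coercions for each atomic axiom (E1)--(E5), handling (E7) by function composition, and handling (E6) by navigating to the hole position in the context $\beta[]$. In fact you supply more detail than the paper, which only writes out the coercion for (E1) explicitly and dismisses the remaining cases with ``similarly'' and ``details omitted for brevity.''
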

\begin{proof}
  For each of the equivalence proof rules, we introduce an appropriate (bijective) coercion.
  For rule \tlabel{E1} we employ
  \bda{ll}
  f_{E_1} = \lambda v. & \CASE\ v \ \OF
    \\ & [\Seq{u}{(\Left\ v)} \Rightarrow \Left\ (\Seq{u}{v}),
    \\ &  \Seq{u}{(\Right\ v)} \Rightarrow \Right\ (\Seq{u}{v})]
  \eda
  where the inverse function is as follows.
  \bda{ll}
  f_{E_1}^{-1} = \lambda v. & \CASE\ v \ \OF
    \\ & [\Left\ (\Seq{u}{v}) \Rightarrow \Seq{(\Left\ u)}{v},
    \\ &  \Right\ (\Seq{u}{v}) \Rightarrow \Seq{(\Right\ u)}{v}]
  \eda  
  Coercions for rules \tlabel{E2-5} can be defined similarly.
  Rule \tlabel{E7} corresponds to function composition and
  rule \tlabel{E6} requires to navigate to the respective hole position.
  Details are omitted for brevity.
  \qed
\end{proof}

We will write $\gamma_1 \stackrel{f}{\simeq} \gamma_2$ to denote the coercion $f$
to carry out the transformation of $\gamma_1$'s parse tree into $\gamma_2$'s parse tree.

What remains is to define coercions to carry out substitution where we replace subterms.

\begin{defn}[Substitution Context]
 We define expressions with multiple holes to characterize substitution of a subterm by another.
  \bda{lcl}
  \delta\Angle{} ::= r \conc \Angle{}
  \mid  \delta\Angle{} + \delta\Angle{}
  \mid \delta\Angle{} + \alpha
  \mid \alpha + \delta\Angle{}
  \eda
  We refer to $\delta\Angle{}$ as a \emph{substitution context}. 

We define a set of functions indexed by the shape of a substitution context.
For $\delta\Angle{}$ we transform
$\alpha\Angle{R}$'s parse tree into $\alpha\Angle{\alpha}$'s parse tree
assuming the equation $R \regEq \alpha$.
\bda{ll}
f_{r \conc \Angle{}} & =
\ba{ll}
\lambda u. & \CASE\ u \ \OF
\\ & [\Seq{u}{(\Fold\ v)} \Rightarrow \Seq{u}{v}]
\ea
\\ \\
f_{\delta\Angle{} + \delta\Angle{}} & =
\ba{ll}
\lambda u. & \CASE\ u \ \OF
\\ & [\Left\ v \Rightarrow \Left\ (f_{\delta\Angle{}} (v)),
  \\ & \Right\ v \Rightarrow \Right\ (f_{\delta\Angle{}} (v))]
\ea
\eda
\bda{ccc}
\ba{ll}
f_{\delta\Angle{} + \alpha} & =
\ba{ll}
\lambda u. & \CASE\ u \ \OF
\\ & [\Left\ v \Rightarrow \Left\ (f_{\delta\Angle{}} (v)),
  \\ & \Right\ v \Rightarrow  \Right\ v]
\ea
\ea
& \ \ \ \ &
\ba{ll}
f_{\alpha + \delta\Angle{}} & =
\ba{ll}
\lambda u. & \CASE\ u \ \OF
\\ & [\Left\ v \Rightarrow \Left\ v,
\\ & \Right\ v \Rightarrow \Right\ (f_{\delta\Angle{}} (v))]
\ea
\ea
\eda
\end{defn}
Functions $f_{\delta\Angle{}}$ navigate to the to-be-replaced subterm and
drop the $\Fold$ constructor if necessary.
There are inverse functions which we omit for brevity.

\begin{lem}[Substitution Coercion]
\label{le:subst-coercion}  
  Let $\EE$ be a set of equations, $R \regEq \alpha \in \EE$
  such that $\EE \turns v : \delta\Angle{R}$ for some parse tree $v$ and
  substitution context $\delta\Angle{}$.
  Then, we find that $\EE \turns f_{\delta\Angle{}} (v) : \delta\Angle{\alpha}$
  where $\flatten{v} = \flatten{f_{\delta\Angle{}} (v)}$.
\end{lem}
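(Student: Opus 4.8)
The plan is to prove the Substitution Coercion Lemma by structural induction on the substitution context $\delta\Angle{}$, since $f_{\delta\Angle{}}$ is itself defined by recursion on exactly that structure. The statement to establish is that whenever $\EE \turns v : \delta\Angle{R}$ and $R \regEq \alpha \in \EE$, the transformed tree satisfies $\EE \turns f_{\delta\Angle{}} (v) : \delta\Angle{\alpha}$ while preserving the flattening, $\flatten{v} = \flatten{f_{\delta\Angle{}} (v)}$. The four cases of the induction correspond precisely to the four productions of $\delta\Angle{}$: the base case $r \conc \Angle{}$, and the three inductive cases $\delta\Angle{} + \delta\Angle{}$, $\delta\Angle{} + \alpha'$, and $\alpha' + \delta\Angle{}$.

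First I would handle the base case $\delta\Angle{} = r \conc \Angle{}$. Here $\delta\Angle{R} = r \conc R$, so by inversion on the typing rules any parse tree of this type must have the form $\Seq{u}{(\Fold\ w)}$ with $\EE \turns u : r$ and $\EE \turns \Fold\ w : R$. The latter forces $\EE \turns w : \alpha$ by inversion on the $\Fold$ rule (using $R \regEq \alpha \in \EE$). The coercion returns $\Seq{u}{w}$, and the $\SeqOp$ typing rule together with $\EE \turns u : r$ and $\EE \turns w : \alpha$ gives $\EE \turns \Seq{u}{w} : r \conc \alpha = \delta\Angle{\alpha}$. Flattening is preserved because $\flatten{\Seq{u}{(\Fold\ w)}} = \flatten{u} \conc \flatten{\Fold\ w} = \flatten{u} \conc \flatten{w} = \flatten{\Seq{u}{w}}$, where the middle equality is just the defining clause $\flatten{\Fold\ w} = \flatten{w}$. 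For the inductive cases, the argument is uniform: $\delta\Angle{R}$ is a sum type $r_1 + r_2$, so by inversion the input is either $\Left\ w$ or $\Right\ w$. In each case the coercion either recurses into the hole-bearing summand via a sub-coercion $f_{\delta'\Angle{}}$ (to which we apply the induction hypothesis) or leaves the hole-free summand untouched; reapplying the corresponding $\Left$/$\Right$ typing rule yields a parse tree of type $\delta\Angle{\alpha}$, and flattening is preserved since $\flatten{\Left\ w} = \flatten{w}$ and $\flatten{\Right\ w} = \flatten{w}$ commute with the recursive call by the induction hypothesis.

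The one point requiring care is the inversion step: I must argue that a parse tree of type $r \conc R$ or $r_1 + r_2$ can only have been built by the single applicable typing rule, so that the $\CASE$ in each coercion definition is exhaustive on well-typed inputs and the pattern match cannot fail. This follows because the typing rules in Definition~\ref{def:parse-trees} are syntax-directed on the type: each type former ($\conc$, $+$, $R$, etc.) is the conclusion of exactly one rule. Hence the main obstacle is less a mathematical difficulty than the bookkeeping of these inversions together with tracking that the hole in $\delta\Angle{}$ sits in the branch the coercion actually descends into; once the inversions are spelled out, the typing reconstruction and the flattening equalities are routine. I would also remark, as the statement's phrasing suggests by analogy with Lemmas~\ref{le:arden-coercion} and~\ref{le:normalize-coercion}, that each $f_{\delta\Angle{}}$ is bijective with the inverse obtained by reinserting the dropped $\Fold$ in the base case and mirroring the recursion in the sum cases, though the lemma as stated only asserts the forward typing and flattening properties.
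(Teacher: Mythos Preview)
Your proposal is correct and takes exactly the same approach as the paper, which merely states ``Follows by induction over the structure of $\delta\Angle{}$.'' You have supplied the routine case analysis and inversion details that the paper leaves implicit.
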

\begin{proof}
  Follows by induction over the structure of $\delta\Angle{}$.
  \qed
\end{proof}

We integrate the elementary coercions into the solving process.
For this purpose, we assume that regular equations and substitutions are annotated with parse trees.
For example, we write $\{ v_1 : R_1 \regEq \alpha_1, \dots, v_n : R_n \regEq \alpha_n \}$ to denote
a set of regular equations $\EE$ where for each $i$ we have that $\EE \turns v_i : R_i$.
Similarly, we write $\{ v_1 : R_1 \mapsto \gamma_1, \dots, v_n : R_n \mapsto \gamma_n \}$
for substitutions.

\begin{defn}[Coercive Solver]
    \bda{c}
  \tlabel{C-Arden} \ \ \ \
  \myirule{R \not\in \alpha} 
     {
   \solveRel{\Config{\psi}{ \Fold\ v : R \regEq s \conc R + \alpha \uplus {\cal E}}}
            {\Config{\psi}{ \Fold\ f_A(v) : R \regEq s^* \conc \alpha \uplus {\cal E}}} }
  \\ \\
  \tlabel{C-Subst} \ \ \ \          
  \myirule{R \not\in \alpha
           \\ \ba{lcl}
           \psi' & = & \{ v : R \mapsto \alpha \}
               \\ & \cup & \{ v : R' \mapsto \alpha' \mid v : R' \mapsto \alpha' \in \psi \wedge R \not\in \alpha' \}
               \\ & \cup & \{ \Fold\ f (f_{\delta\Angle{}} (v)) : R' \mapsto \alpha'' \mid 
                    \Fold\ v : R' \mapsto \delta'\Angle{R} \in \psi \wedge                    
            \\  && \ \ \ \ \ \ \ \ \ \ \ \ \ \ \ \ \ \ \ \ \ \ \ \ \ \ \ \ \ R \not\in \delta'\Angle{\alpha} \wedge
            \\  && \ \ \ \ \ \ \ \ \ \ \ \ \ \ \ \ \ \ \ \ \ \ \ \ \ \ \ \ \ \delta'\Angle{\alpha} \stackrel{f}{\simeq} \alpha'' \}
              \ea
           \\ \ba{lcl}
                \EE' & = & \{ v : R' \regEq \alpha' \mid v : R' \regEq \alpha' \in \EE \wedge R \not\in \alpha' \}
            \\ & \cup & \{ \Fold\ f (f_{\delta\Angle{}} (v)) : R' \regEq \alpha'' \mid 
                    \Fold\ v : R' \regEq \delta'\Angle{R} \in \EE \wedge                    
            \\  && \ \ \ \ \ \ \ \ \ \ \ \ \ \ \ \ \ \ \ \ \ \ \ \ \ \ \ \ \ R \not\in \delta'\Angle{\alpha} \wedge 
            \\  && \ \ \ \ \ \ \ \ \ \ \ \ \ \ \ \ \ \ \ \ \ \ \ \ \ \ \ \ \ \delta'\Angle{\alpha} \stackrel{f}{\simeq} \alpha'' \}
              \ea
           }
          {\solveRel{\Config{\psi}{\Fold\ v : R \regEq \alpha \uplus {\cal E}}}
                    {\Config{\psi'}{\EE'}}
          }
    \eda
\end{defn}  

In the coercive Arden rule, we apply the Arden coercion introduced in Lemma~\ref{le:arden-coercion}.
During substitution we uniformly normalize right-hand sides of equations \emph{and} the codomains of substitutions.
Side condition $R \not\in \delta'\Angle{\alpha}$ guarantees that all occurrences of $R$ are replaced.
Parse trees are transformed by first
applying the substitution coercion followed by the normalization coercion.
Thus, we can transform parse trees of regular equations into parse trees of solutions.

\begin{prop}[Coercive Solving]
\label{prop:coercive-solving}  
  Let $\EE = \{ v_1 : R_1 \regEq \alpha_1, \dots, v_n : R_n \regEq \alpha_n \}$ be a parse tree annotated
  set of regular equations in normal form where $\EE \turns v_i : R_i$ for $i=1,\dots,n$.
  Then, $\solveRels{\Config{\{ \}}{{\cal E}}}{\Config{\psi}{\{ \}}}$ for some substitution $\psi$
  where $\psi = \{ u_1 : R_1 \mapsto s_1, \dots, u_n : R_n \mapsto s_n \}$
  such that $\turns u_i : s_i$ and $\flatten{u_i} = \flatten{v_i}$ for $i=1,\dots,n$.
\end{prop}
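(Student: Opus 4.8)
The plan is to lean on the Theorem on Regular Equation Solutions for everything except the parse trees. Observe that the side conditions of \tlabel{C-Arden} and \tlabel{C-Subst} are literally those of \tlabel{Arden} and \tlabel{Subst}; hence the coercive solver reduces exactly like the unannotated one, and I would inherit termination, the reachability of the minimal configuration $\Config{\psi}{\{\}}$, and the fact that each resulting $s_i = \psi(R_i)$ is a genuine (symbol-free) regular expression directly from that theorem. What remains is purely the parse-tree bookkeeping, which I would package as an invariant preserved along the derivation: in every reachable $\Config{\psi}{\EE}$, (i) each equation annotation $\Fold\ v : R \regEq \alpha$ satisfies $\EE \turns \Fold\ v : R$ (equivalently $\EE \turns v : \alpha$); (ii) each substitution entry $w : R' \mapsto \gamma$ records a valid parse tree for its codomain, i.e.\ $\EE \turns w : \gamma$; and (iii) the flattening of every attached parse tree equals the flattening that the same symbol carried in the initial $\EE$. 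The proposition is then the instance of this invariant at $\Config{\psi}{\{\}}$: since the $s_i$ are symbol-free regular expressions, $\EE \turns u_i : s_i$ with $\EE = \{\}$ collapses to $\turns u_i : s_i$, and clause (iii) gives $\flatten{u_i} = \flatten{v_i}$.

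For the inductive step I would do a case analysis on the applied rule. The \tlabel{C-Arden} case is immediate from Lemma~\ref{le:arden-coercion}: from $\EE \turns v : s \conc R + \alpha$ it produces $\EE \turns f_A(v) : s^* \conc \alpha$, so wrapping with $\Fold$ and using the (un)fold rule for the rewritten equation $R \regEq s^* \conc \alpha$ re-establishes (i); all other annotations are untouched and $\flatten{v} = \flatten{f_A(v)}$ gives (iii).

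The \tlabel{C-Subst} case carries the real work. Deleting the equation $\Fold\ v : R \regEq \alpha$ yields the seed entry $v : R \mapsto \alpha$, which is well-typed for its codomain because $\EE \turns \Fold\ v : R$ is by definition $\EE \turns v : \alpha$ and $R \not\in \alpha$. For every remaining equation or codomain whose right-hand side $\delta'\Angle{R}$ still mentions $R$, I would push its parse tree through the replacement in two stages: first Lemma~\ref{le:subst-coercion} turns a parse tree of $\delta'\Angle{R}$ into one of $\delta'\Angle{\alpha}$ via $f_{\delta'\Angle{}}$ (this is precisely where the $\Fold$ at each occurrence of $R$ is stripped), and then Lemma~\ref{le:normalize-coercion} turns it into a parse tree of the normalized right-hand side $\alpha''$ via the coercion $f$ witnessing $\delta'\Angle{\alpha} \stackrel{f}{\simeq} \alpha''$; the composite $f \circ f_{\delta'\Angle{}}$ is exactly the coercion recorded by the rule. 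The main obstacle I anticipate is keeping this parse-tree transformation in lock-step with the term-level rewriting used in the proof of the underlying theorem: I must argue that a single substitution context $\delta'\Angle{}$ captures \emph{all} occurrences of $R$ (so that the side condition $R \not\in \delta'\Angle{\alpha}$ holds and the simultaneous recursion of $f_{\delta'\Angle{}}$ through the $+$-branches matches the way $\{R \mapsto \alpha\}(\alpha')$ is formed and then normalized by rules \tlabel{E1}--\tlabel{E7}), and that the normalization coercion $f$ indeed realizes that very chain of equivalence steps. Granting this, clauses (i) and (ii) follow from the typing conclusions of the two lemmas applied under the updated $\EE'$, and clause (iii) follows by composing their flattening-preservation guarantees.

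Finally, I would note that the correspondence between the $n$ final entries $u_i : R_i \mapsto s_i$ and the $n$ original equations is maintained because each original symbol is moved into $\psi$ exactly once (by a \tlabel{C-Subst} step) and thereafter only has its recorded parse tree rewritten; together with the termination already supplied by the Theorem on Regular Equation Solutions this yields the claimed $\psi = \{u_1 : R_1 \mapsto s_1, \dots, u_n : R_n \mapsto s_n\}$.
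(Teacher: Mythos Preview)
Your proposal is correct and follows essentially the same approach as the paper: the paper's proof is the single line ``Follows immediately from Lemmas~\ref{le:arden-coercion}, \ref{le:normalize-coercion} and~\ref{le:subst-coercion},'' and your argument is precisely an unpacking of how those three lemmas (plus the termination already established for the unannotated solver) combine to maintain the parse-tree invariant through each \tlabel{C-Arden} and \tlabel{C-Subst} step. The obstacles you flag about $\delta'\Angle{}$ capturing all occurrences of $R$ and about the normalization coercion tracking the \tlabel{E1}--\tlabel{E7} chain are exactly the details the paper elides, and your plan for handling them is sound.
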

\begin{proof}
  Follows immediately from Lemmas~\ref{le:arden-coercion}, \ref{le:normalize-coercion}
  and~\ref{le:subst-coercion}.
  \qed
\end{proof}

\begin{thm}[Unambiguous Solutions]
  \label{theo:reg-solve-unambig}
  Let $\EE$ be a set of non-overlapping equations
 where $\solveRels{\Config{\{ \}}{{\cal E}}}{\Config{\psi}{\{ \}}}$ for some substitution $\psi$.
  Then, for each $R \in {\it dom}(\EE)$ we find that $\psi(R)$ is unambiguous.
\end{thm}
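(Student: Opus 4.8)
The plan is to reduce unambiguity of the solution $\psi(R)$ to unambiguity of the original symbol $R$ with respect to $\EE$, by transporting ambiguity backward along the bijective coercion produced by the coercive solver. Two ingredients drive the argument. First, by Proposition~\ref{prop:regeq-unambig}, a non-overlapping set $\EE$ guarantees that every $R \in {\it dom}(\EE)$ is unambiguous with respect to $\EE$; equivalently, for every word $w$ there is \emph{at most one} parse tree $v$ with $\EE \turns v : R$ and $\flatten{v} = w$. Second, the coercive solver of Proposition~\ref{prop:coercive-solving} explains the very solving run $\solveRels{\Config{\{\}}{\EE}}{\Config{\psi}{\{\}}}$ as a sequence of parse-tree transformations, each of which is bijective and flattening-preserving (Lemmas~\ref{le:arden-coercion}, \ref{le:normalize-coercion} and~\ref{le:subst-coercion}).

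First I would assemble, for each $R \in {\it dom}(\EE)$, a single composite coercion $\Phi_R$ obtained by composing the elementary coercions applied along the fixed solving run. The construction proceeds by induction on the number of solving steps: rule \tlabel{C-Arden} contributes the Arden coercion $f_A$, rule \tlabel{C-Subst} contributes a substitution coercion $f_{\delta\Angle{}}$ followed by a normalization coercion, and a composition of bijections is again a bijection. Since each elementary coercion is flattening-preserving, so is $\Phi_R$, and since the run terminates in $\Config{\psi}{\{\}}$ with $\psi(R)$ a plain expression that refers to no symbols (hence carries no $\Fold$ constructors), $\Phi_R$ is a flattening-preserving bijection from $\{ v \mid \EE \turns v : R \}$ onto $\{ u \mid \turns u : \psi(R) \}$, with a well-defined inverse $\Phi_R^{-1}$.

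Given $\Phi_R$, the conclusion follows by contradiction. Suppose $\psi(R)$ is ambiguous, so there are distinct parse trees $u_1 \ne u_2$ with $\turns u_1 : \psi(R)$, $\turns u_2 : \psi(R)$ and $\flatten{u_1} = \flatten{u_2}$. Applying the inverse coercion yields $v_i = \Phi_R^{-1}(u_i)$ with $\EE \turns v_i : R$; injectivity of $\Phi_R^{-1}$ gives $v_1 \ne v_2$, and flattening-preservation gives $\flatten{v_1} = \flatten{u_1} = \flatten{u_2} = \flatten{v_2}$. Thus $R$ would be ambiguous with respect to $\EE$, contradicting Proposition~\ref{prop:regeq-unambig}. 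Hence no such $u_1, u_2$ exist and $\psi(R)$ is unambiguous.

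I expect the main obstacle to lie not in the contradiction step but in justifying that $\Phi_R$ is a bijection between the \emph{entire} parse-tree sets rather than merely a map on the particular annotations tracked by the coercive solver. Proposition~\ref{prop:coercive-solving} is phrased in terms of the seed annotations $v_i$, so I would need to observe that the coercions $f_A$, $f_{\delta\Angle{}}$ and the normalization coercions are defined structurally and are independent of the specific tree fed into them; surjectivity then amounts to checking that every parse tree of the target type arises as an image, which is exactly the content of the stated inverse coercions. The delicate bookkeeping is tracking how the parse-tree relation, and in particular the $\Fold$ layers encoding the recursive equations, is added and removed across the changing equation sets at each \tlabel{C-Arden} and \tlabel{C-Subst} step, so that the type indices of the composed coercions line up into the single bijection $\Phi_R$. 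Once this inductive invariant is in place, bijectivity and flattening-preservation of the composite follow immediately from the corresponding properties of its factors.
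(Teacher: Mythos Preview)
Your proposal is correct and follows essentially the same approach as the paper: the paper's proof is the one-liner ``Follows from Propositions~\ref{prop:regeq-unambig} and~\ref{prop:coercive-solving} and the fact that coercions are bijective,'' and you have spelled out precisely how these ingredients combine, including the contradiction via the inverse coercion. The obstacle you flag (that Proposition~\ref{prop:coercive-solving} is stated for seed annotations rather than arbitrary parse trees) is real but, as you note, resolved by the structural definition of the elementary coercions; the paper leaves this implicit.
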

\begin{proof}
  Follows from Propositions~\ref{prop:regeq-unambig} and~\ref{prop:coercive-solving}
  and the fact that coercions are bijective.
  \qed
\end{proof}

\section{Brzozowski's algebraic method}
\label{sec:algebraic-method}

We revisit Brzozowski's algebraic method~\cite{321249} to transform
an automaton into a regular expression.
Based on our results we can show that resulting regular expressions are always unambiguous.

\begin{defn}[Deterministic Finite Automata (DFA)]
  A deterministic finite automaton (DFA) is a 5-tuple $M=(Q, \Sigma, \delta, q_0, F)$
  consisting of a a finite set $Q$ of states,
  a finite set $\Sigma$ of symbols, a transition function $\delta : Q \times \Sigma \rightarrow Q$,
  an initial state $q_0 \in Q$, and a set $F$ of accepting states.
  We say $M$ accepts word $w=x_1 \dots x_n$ if there exists a sequence of states $p_1, \dots, p_{n+1}$
  such that $p_{i+1} = \delta(p_i, x_n)$ for $i=1,\dots,n$,
  $p_1 = q_0$ and $p_{n+1} \in F$.
\end{defn}

Brzozowski turns a DFA  into an equivalent set of (characteristic) regular equations.

\begin{defn}[Characteristic Equations]
  Let $M=(Q, \Sigma, \delta, q_0, F)$ be a DFA.
  We define ${\cal E}_M = \{ R_q \regEq \sum_{x \in \Sigma} x \conc R_{\delta(q,x)} + f(q) \mid q \in Q\}$ where $f(q) = \varepsilon$ if $q \in F$. Otherwise, $f(q) = \phi$.
  We refer to ${\cal E}_M$ as the \emph{characteristic equations} obtained from $M$.
\end{defn}

He suggests solving these equations via Arden's Lemma but the exact details (e.g.~normalization)
are not specified. Assuming we use the solving method specified in Definition~\ref{def:solving}
we can conclude the following.
By construction, characteristic equations are non-overlapping.
From Theorem \ref{theo:reg-solve-unambig} we can derive the following result.

\begin{corollary}
  Solutions obtained from characteristic equations are unambiguous.
\end{corollary}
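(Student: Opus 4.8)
The plan is to read the corollary as a direct instance of Theorem~\ref{theo:reg-solve-unambig}. That theorem concludes that $\psi(R)$ is unambiguous for every $R \in {\it dom}(\EE)$ as soon as (i)~$\EE$ is a set of non-overlapping equations and (ii)~there is a full solving derivation $\solveRels{\Config{\{\}}{\EE}}{\Config{\psi}{\{\}}}$. Hence the whole argument reduces to discharging these two hypotheses for the characteristic equations $\EE_M$; no new machinery is required.

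For~(i) I would match $\EE_M$ against the non-overlapping template equation by equation. Fixing a state $q$ and enumerating $\Sigma = \{x_1,\dots,x_k\}$, the defining right-hand side is $x_1 \conc R_{\delta(q,x_1)} + \dots + x_k \conc R_{\delta(q,x_k)} + f(q)$. Its leading literals are exactly the alphabet symbols, hence pairwise distinct, and the trailing summand $f(q)$ is by construction either $\varepsilon$ or $\phi$; these are precisely the two conditions the template demands. The one point worth stating explicitly is that distinct symbols may share a target, $\delta(q,x_i) = \delta(q,x_j)$, so the variables $R_{\delta(q,x_i)}$ need not be distinct (and a state may loop back to itself). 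This is harmless, because the non-overlapping condition constrains only the leading symbols, not the target variables. I would call this out, since it is the only place where the template match could appear to fail.

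For~(ii) I would invoke the solving termination result (the Regular Equation Solutions theorem), whose well-founded order on configurations guarantees that solving of a normal-form set reaches some $\Config{\psi}{\{\}}$. It then remains to check that each characteristic equation meets, or can be brought to meet, the normal-form requirement: a state with no self-loop already satisfies clause~(1), since its defining variable is absent from the right-hand side, while for the remaining states the associativity, commutativity and merging laws \tlabel{E3}--\tlabel{E5} reorder the $R_q$-summand to the front and combine repeated target variables into clause~(2). With both hypotheses in hand, Theorem~\ref{theo:reg-solve-unambig} applies and yields unambiguity of $\psi(R_q)$ for every $R_q \in {\it dom}(\EE_M)$.

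I expect the real work to lie in hypothesis~(ii) rather than~(i). Firing the Arden rule on a state reached by several equal-target symbols --- for instance $R_q \regEq x_i \conc R_q + x_j \conc R_q + f(q)$ --- first requires collapsing the repeated occurrences into a single factor $(x_i + x_j) \conc R_q$, and the same merging is what the distinctness part of normal form needs; yet this step is exactly what destroys the single-leading-symbol shape that makes $\EE_M$ non-overlapping. The delicate point is therefore to keep the two shapes cleanly separated: the original non-overlapping $\EE_M$ is what feeds the unambiguity half of the argument (via Proposition~\ref{prop:regeq-unambig}), whereas its normalized-into-normal-form counterpart is what admits the solving derivation of Proposition~\ref{prop:coercive-solving}. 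One must confirm that the normalization laws genuinely bridge the gap --- bringing every equation into the $s \conc R_q + \alpha$ form with $R_q \not\in \alpha$ that Arden expects --- so that a solving derivation of the kind Theorem~\ref{theo:reg-solve-unambig} requires really does exist for the non-overlapping $\EE_M$.
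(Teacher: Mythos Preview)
Your proposal is correct and follows exactly the paper's approach: the paper's own argument is the two-line observation that characteristic equations are non-overlapping by construction and that Theorem~\ref{theo:reg-solve-unambig} therefore applies. Your additional care about hypothesis~(ii)---that non-overlapping equations with repeated target variables or misplaced self-loop summands are not literally in normal form, so one must first invoke the \tlabel{E3}--\tlabel{E5} laws (and the accompanying bijective normalization coercions) before the solving derivation is guaranteed to exist---is a genuine subtlety that the paper's terse proof passes over in silence; your handling of it is sound.
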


Instead of a DFA we can also turn a non-deterministic automaton (NFA)
into an equivalent regular expression.
Each $\varepsilon$ transitions is represented by the component $\varepsilon \conc R$.
For two non-deterministic transitions via symbol $x$ to follow states $R_1$
and $R_2$, we generate the component $x \conc R_1 + x \conc R_2$.
Resulting characteristic equations will be overlapping in general.
Hence, we can no longer guarantee unambiguity.

\section{Subtraction and Shuffle}
\label{sec:subtract-intersect}

We introduce direct methods to subtract and shuffle regular expressions.
Instead of turning the regular expressions into a DFA and carrying out the operation
at the level of DFAs, we generate an appropriate set of equations
by employing Brzozowski derivatives.
Solving the equations yields then the desired result.
In essence, our method based on solutions resulting from derivative-based equations
is isomorphic to building a derivative-based DFA from expressions, applying
the product automaton construction among DFAs and then turn the resulting
DFA into an expression via Brzozowski's algebraic method.

For subtraction, equations generated are non-overlapping.
Hence, resulting expressions are also unambiguous.
First, we recall the essential of derivatives before discussing
each operation including some optimizations.

\subsection{Brzozowski's Derivatives}

The \emph{derivative} of a regular expression $r$ with respect to some symbol $x$,
written $\deriv{r}{x}$, is a regular expression for the left quotient of $\lang(r)$ with respect to $x$.
That is, $\lang(\deriv{r}{x}) = \{ w \in\Sigma^* \mid x \conc w \in \lang(r) \}$.
A derivative $\deriv{r}{x}$ can be computed by recursion
over the structure of the regular expression $r$.

\begin{defn}[Brzozowski Derivatives~\cite{321249}]
\bda{ll}
\deriv{\phi}{x} = \phi
 & 
 \deriv{\varepsilon}{x} = \phi
 \\
 \\
  \deriv{y}{x} = \left \{ \ba{ll} \varepsilon & \mbox{if $x=y$}
                          \\      \phi        & \mbox{otherwise}
                          \ea
                          \right.
 & \deriv{r + s}{x} = \deriv{r}{x} + \deriv{s}{x}
\\
\\                          
\deriv{r \conc s}{x} = \left \{ \ba{ll}  \deriv{r}{x} \conc s & \mbox{if $\varepsilon \not\in \lang(r)$ \ \ \mbox{}}
\\  \deriv{r}{x} \conc s + \deriv{s}{x} & \mbox{otherwise}
\ea
\right.
 & 
\deriv{r^*}{x} = \deriv{r}{x} \conc r^*
\eda
\end{defn}

\begin{example}
  The derivative of $(x+y)^*$ with respect to symbol $x$ is $(\varepsilon + \phi) \conc (x+y)^*$.
  The calculation steps are as follows:
  $$
  \deriv{(x+y)^*}{x}
  = \deriv{x+y}{x} \conc (x+y)^*
  = (\deriv{x}{x} + \deriv{y}{x}) \conc (x+y)^*
  = (\varepsilon + \phi) \conc (x+y)^*
  $$

\end{example}

\begin{thm}[Expansion \cite{321249}]
  \label{th:representation}
  Every regular expression $r$ can be represented
  as the sum of its derivatives with respect to all symbols. If $\Sigma =
  \{x_1, \dots, x_n\}$, then
  \bda{c}
  r \semeq x_1 \conc \deriv{r}{x_1} + \cdots + x_n \conc \deriv{r}{x_n}
  \ \mbox{($+ \varepsilon$ if $r$ nullable)}
  \eda  
\end{thm}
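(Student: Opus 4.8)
The plan is to prove the claimed equivalence as a plain equality of languages, relying on the defining property of the derivative already recorded above, namely $\lang(\deriv{r}{x}) = \{ w \in \Sigma^* \mid x \conc w \in \lang(r) \}$. Writing $\rho$ for the right-hand side $x_1 \conc \deriv{r}{x_1} + \cdots + x_n \conc \deriv{r}{x_n}$ (together with the optional $\varepsilon$ summand), it suffices to show $\lang(r) = \lang(\rho)$, since then $r \semeq \rho$ by definition.

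First I would compute each summand directly. By the meaning function, $\lang(x_i \conc \deriv{r}{x_i}) = \{x_i\} \conc \lang(\deriv{r}{x_i})$, and substituting the derivative characterization turns this into $\{ x_i \conc w' \mid x_i \conc w' \in \lang(r)\}$, that is, exactly the words of $\lang(r)$ that begin with the symbol $x_i$. Taking the union over all $x_i \in \Sigma$ then collects precisely the non-empty words of $\lang(r)$, because every non-empty word begins with some symbol and the $x_i$ exhaust $\Sigma$.

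It remains to account for the empty word. No summand $x_i \conc \deriv{r}{x_i}$ can contribute $\varepsilon$, since its language contains only words of length at least one. Hence $\varepsilon$ must be supplied separately exactly when $\varepsilon \in \lang(r)$, which is precisely the role of the optional $\varepsilon$ summand: when $r$ is nullable we include $\{\varepsilon\}$, and when it is not, $\varepsilon \notin \lang(r)$ so no correction is needed. Combining the non-empty and empty cases yields $\lang(\rho) = \lang(r)$, as desired.

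The only genuinely technical point is the derivative characterization itself. If it is taken as established, the argument above is immediate. Otherwise I would prove $\lang(\deriv{r}{x}) = \{ w \mid x \conc w \in \lang(r)\}$ by structural induction on $r$; the base cases and the choice case are routine, and I expect the main obstacle to be the concatenation case, where the two-branch definition of $\deriv{r \conc s}{x}$ must be justified by a case split on whether $r$ is nullable — the nullable branch adding $\deriv{s}{x}$ precisely to capture words $x \conc w$ whose $r$-part matches $\varepsilon$ and whose leading symbol is consumed by $s$.
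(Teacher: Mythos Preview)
Your argument is correct and is exactly the standard route: compute $\lang(x_i \conc \deriv{r}{x_i})$ via the left-quotient characterization of the derivative, take the union over $\Sigma$ to recover all non-empty words of $\lang(r)$, and handle $\varepsilon$ separately via nullability. Note, however, that the paper does not supply its own proof of this theorem at all; it simply cites it from Brzozowski's original work~\cite{321249}, so there is no in-paper proof to compare against. Your proposal is an appropriate self-contained justification, and the optional structural-induction verification of the derivative characterization (in particular the concatenation case split on nullability) is indeed the only place where any work is required.
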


\begin{defn}[Descendants and Similarity]
\label{def:desc-sim}
A \emph{descendant} of $r$ is either $r$ itself
or the derivative of a descendant.
We say $r$ and $s$ are \emph{similar}, written $r \sim s$,
if one can be transformed into the other by
finitely many applications of the rewrite rules (Idempotency) $r+r \sim r$,
(Commutativity) $r + s \sim s + r$,
(Associativity) $r + (s + t) \sim (r + s) + t$,
(Elim1) $\varepsilon \conc r \sim r$,
(Elim2) $\phi \conc r \sim \phi$,
(Elim3) $\phi + r \sim r$, and
(Elim4) $r + \phi \sim r$.
\end{defn}
\begin{lem}
 \label{le:sim-re}  
  Similarity is an equivalence relation that respects regular
  expression equivalence: $r \sim s$ implies $r \semeq s$.
\end{lem}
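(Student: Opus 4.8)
The plan is to treat the two claims separately: first that $\sim$ is an equivalence relation, and then that every similarity step preserves the language denotation $\lang$.

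For the equivalence-relation part, I would argue directly from the definition of $\sim$ as the relation generated by finitely many applications of the seven rewrite rules. Reflexivity holds by taking zero rule applications, so that $r$ is transformed into itself. Symmetry holds because each of the seven rules is itself symmetric --- each is written as $\cdot \sim \cdot$ and may be read in either direction --- so reversing a transformation sequence witnesses the opposite similarity. Transitivity holds by concatenating two transformation sequences: if $r$ is transformed into $s$ and $s$ into $t$, the juxtaposed sequence transforms $r$ into $t$. This part is essentially bookkeeping and carries no real difficulty.

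For the soundness part ($r \sim s$ implies $r \semeq s$), I would proceed in two stages. First I would verify that each individual rewrite rule relates two expressions with identical languages, by a one-line computation from the definition of $\lang$ in each case: for instance $\lang(r + r) = \lang(r) \cup \lang(r) = \lang(r)$ for Idempotency, $\lang(\phi \conc r) = \{\} \conc \lang(r) = \{\} = \lang(\phi)$ for Elim2, and analogously for Commutativity, Associativity, Elim1, Elim3, and Elim4, each reducing to a basic identity of set union or concatenation. Second, since the rules are schematic and may be applied to an arbitrary subterm, I would invoke the compositionality of $\lang$: because $\lang(r + s) = \lang(r) + \lang(s)$, $\lang(r \conc s) = \lang(r) \conc \lang(s)$, and $\lang(r^*) = \lang(r)^*$, language equivalence is a congruence with respect to the regular-expression operators. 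Hence replacing any subexpression by a language-equivalent one leaves the overall language unchanged.

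Combining these, a single similarity step preserves $\semeq$, and an induction on the number of rule applications establishes that $r \sim s$ implies $r \semeq s$ for the whole transformation sequence. The only point requiring care --- rather than a genuine obstacle --- is the contextual (subterm) reading of the rewrite rules, which is exactly what forces the appeal to the congruence property of $\lang$; without it, soundness of the individual rules at the top level would not immediately lift to nested occurrences such as the subterm $\varepsilon + \phi$ inside $(\varepsilon + \phi) \conc (x+y)^*$.
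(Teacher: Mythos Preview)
Your proposal is correct and follows the standard route: first the closure properties that make $\sim$ an equivalence, then per-rule soundness at the top level, then a congruence argument to lift soundness to arbitrary subterm positions, and finally induction on the length of the rewrite sequence. The paper itself states this lemma without proof, treating it as folklore, so there is no author proof to compare against; your write-up is exactly the argument one would expect and contains no gaps.
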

\begin{thm}[Finiteness \cite{321249}]
The elements of the set of descendants of a regular expression belong
to finitely many similarity equivalence classes.
\end{thm}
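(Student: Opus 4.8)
The plan is to proceed by structural induction on the regular expression $r$, bounding in each case the number of similarity classes among its descendants. Write $\deriv{r}{w}$ for the iterated derivative with respect to a word $w \in \Sigma^*$, defined by $\deriv{r}{\varepsilon} = r$ and $\deriv{r}{x \conc w} = \deriv{\deriv{r}{x}}{w}$. The descendants of $r$ are then exactly the expressions $\deriv{r}{w}$ for $w \in \Sigma^*$, so it suffices to show that this family falls into finitely many $\sim$-classes.

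The key enabling observation concerns the rewrite rules defining $\sim$. Idempotency, Commutativity and Associativity together let us regard any finite sum $s_1 + \dots + s_k$, modulo $\sim$, as the \emph{set} $\{s_1, \dots, s_k\}$ of its summands; rules Elim3 and Elim4 additionally discard any $\phi$ summand. Hence if every summand is drawn (up to $\sim$) from a finite pool $C$ of expressions, then the entire sum lies in one of at most $2^{|C|}$ similarity classes. This is the mechanism that prevents iterated derivatives from proliferating, since all of the derivative rules produce right-hand sides built from sums and concatenations of sub-derivatives.

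The base cases are immediate: $\phi$ has the single descendant $\phi$; the descendants of $\varepsilon$ lie in $\{\varepsilon, \phi\}$; and those of a symbol $x$ lie in $\{x, \varepsilon, \phi\}$. For the inductive step I would, for each composite constructor, establish a \emph{shape lemma} by a secondary induction on $w$ that expresses $\deriv{r}{w}$ modulo $\sim$ in terms of the sub-derivatives. For a sum, $\deriv{r_1 + r_2}{w} \sim \deriv{r_1}{w} + \deriv{r_2}{w}$ whenever $w \neq \varepsilon$, so by the induction hypotheses for $r_1$ and $r_2$ the summands come from a finite pool and the set-collapse observation applies. For a concatenation, $\deriv{r_1 \conc r_2}{w}$ is similar to $\deriv{r_1}{w} \conc r_2$ together with a sum of terms $\deriv{r_2}{v}$, one for each splitting $w = u \conc v$ with $v \neq \varepsilon$ and $\varepsilon \in \lang(\deriv{r_1}{u})$ (the nullable prefixes of $r_1$ that ``spawn'' a derivative of $r_2$); since $r_2$ is fixed and $\deriv{r_1}{w}$, $\deriv{r_2}{v}$ each range over finitely many classes by the induction hypotheses, both the leading factor and the spawned sum fall into finitely many classes. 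For a star, a similar secondary induction gives that $\deriv{r_1^*}{w}$ is similar to a sum of terms $\deriv{r_1}{v} \conc r_1^*$ over certain nonempty suffixes $v$ of $w$; again $r_1^*$ is fixed and $\deriv{r_1}{v}$ ranges over finitely many classes, so the collapse observation bounds the result.

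I expect the main obstacle to be the shape lemmas for concatenation and star. Their proofs require a secondary induction on $w$ in which one must carefully track the nullability tests $\varepsilon \in \lang(\deriv{r_1}{u})$ that trigger the creation of new $\deriv{r_2}{v}$ (respectively $\deriv{r_1}{v} \conc r_1^*$) summands, and one must verify that ACI-normalization genuinely keeps these accumulating sums within a finite pool rather than merely bounding the length of each individual sum. Once these lemmas are in place, combining them with the induction hypotheses and the set-collapse observation yields finiteness in every case, and hence for $r$.
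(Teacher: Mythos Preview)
The paper does not actually prove this theorem: it is stated with a citation to Brzozowski's 1964 paper and used as a black box, with only the remark afterwards that the three ACI rules (Idempotency, Commutativity, Associativity) already suffice for finiteness. So there is no in-paper proof to compare against.

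Your sketch is essentially the classical structural-induction argument from the cited source. The base cases are immediate; for $+$, $\conc$ and ${}^*$ you establish shape lemmas showing that every iterated derivative is, modulo ACI, a sum whose summands are drawn from a finite pool built out of descendants of the immediate subexpressions (for $\conc$ and ${}^*$, concatenated with a fixed right factor), and then finiteness follows from the induction hypotheses together with the set-collapse observation. This is correct and matches both Brzozowski's original proof and the paper's remark that ACI alone is enough. The one point worth making explicit when you write it out is that similarity is a congruence (the rewrite rules may be applied in any context), so that $\deriv{r_1}{w} \sim \deriv{r_1}{w'}$ entails $\deriv{r_1}{w} \conc r_2 \sim \deriv{r_1}{w'} \conc r_2$; without this the ``fixed right factor'' step in the concatenation and star cases does not go through.
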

Similarity rules (Idempotency), (Commutativity), and (Associativity)
suffice to achieve finiteness. Elimination rules are added to
obtain a compact \emph{canonical representative} for
equivalence class of similar regular expressions.
The canonical form is obtained by systematic application
of the similarity rules in Definition~\ref{def:desc-sim}.
We enforce right-associativity of concatenated expressions,
sort alternative expressions according to their size and their first
symbol, and concatenations lexicographically,
assuming an arbitrary total order on $\Sigma$.
We further remove duplicates
and apply elimination rules exhaustively (the details are standard \cite{Grabmayer:2005:UPC:2156157.2156171}). 

\begin{defn}[Canonical Representatives]
\label{def:cnf}  
  For a regular expression $r$, we write $\simp{r}$ to denote
  the canonical representative among all expressions
  similar to $r$.
  We write $\Desc{r}$ for the set of canonical representatives
  of the finitely many dissimilar descendants of $r$.
\end{defn}  

\begin{example}
  We find that $\simp{(\varepsilon + \phi) \conc (x+y)^*} = (x+y)^*$ where $x<y$.
\end{example}

\subsection{Subtraction}

\begin{defn}[Equations for Subtraction]
\label{def:eq-subtract}  
  Let $r, s$ be two regular expressions.
  For each pair $(r',s') \in \Desc{r} \times \Desc{s}$
  we introduce a variable $R_{r',s'}$.
  For each such $R_{r',s'}$ we define an equation of the following form.
  If $\lang(r') = \emptyset$, we set $R_{r',s'} \regEq \phi$.
  Otherwise, $R_{r',s'} \regEq \sum_{x \in \Sigma} x \conc R_{\simp{\deriv{r'}{x}}, \simp{\deriv{s'}{x}}} + t$
  where $t = \varepsilon$ if $\varepsilon \in \lang(r'), \varepsilon \not\in \lang(s')$,
  otherwise $t = \phi$.
  All equations are collected in a set ${\cal S}_{r,s}$.

  Let $\psi = \solve{{\cal S}_{r,s}}$.
  Then, we define $r-s = \psi(R_{r,s})$.
\end{defn}


As the set of canonical derivatives is finite, the set $\solve{{\cal S}_{r,s}}$ is finite as well.
Hence, a solution must exist. Hence, $r-s$ is well-defined.

\begin{lem}
 \label{le:subtract-nf}  
  Let $r, s$ be two regular expressions.
  Then, we find that
  \bda{c}
\lang(r) - \lang(s) \semeq \sum_{x \in \Sigma} x \conc (\lang(\simp{\deriv{r}{x}}) - \lang(\simp{\deriv{s}{x}})) + T
\eda
where $T = \{ \varepsilon \}$ if $\varepsilon \in \lang(r), \varepsilon \not\in \lang(s)$,
otherwise $T = \emptyset$.
\end{lem}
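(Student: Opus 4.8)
The plan is to prove the language equation
\[
\lang(r) - \lang(s) \semeq \sum_{x \in \Sigma} x \conc (\lang(\simp{\deriv{r}{x}}) - \lang(\simp{\deriv{s}{x}})) + T
\]
by reasoning directly about membership of words in both sides. First I would observe that the empty word $\varepsilon$ and non-empty words can be treated separately, since every word is either $\varepsilon$ or of the form $x \conc w$ for a unique leading symbol $x \in \Sigma$ and residual word $w$. The right-hand side already exhibits this split syntactically: the summand $T$ governs the $\varepsilon$ case, while each $x \conc (\dots)$ governs the words beginning with $x$. So the proof naturally decomposes into two claims, one for $\varepsilon$ and one for the symbol-prefixed words.

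For the $\varepsilon$ case, I would argue that $\varepsilon \in \lang(r) - \lang(s)$ holds exactly when $\varepsilon \in \lang(r)$ and $\varepsilon \not\in \lang(s)$, which is precisely the side condition defining $T = \{\varepsilon\}$; otherwise $\varepsilon$ is in neither side. Note that none of the summands $x \conc (\dots)$ can contribute $\varepsilon$, since every word in $x \conc L$ begins with the symbol $x$. Thus the $\varepsilon$ membership on both sides agrees by construction.

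For the non-empty case, fix a symbol $x$ and a word $w$, and consider whether $x \conc w$ lies in each side. Here the key tool is the defining property of the Brzozowski derivative, namely $\lang(\deriv{r}{x}) = \{ w \mid x \conc w \in \lang(r) \}$, together with Lemma~\ref{le:sim-re}, which guarantees $\lang(\simp{\deriv{r}{x}}) = \lang(\deriv{r}{x})$ since $\simp{\cdot}$ only applies equivalence-preserving similarity rewrites. I would then chain the equivalences: $x \conc w \in \lang(r) - \lang(s)$ iff $x \conc w \in \lang(r)$ and $x \conc w \not\in \lang(s)$, iff $w \in \lang(\deriv{r}{x})$ and $w \not\in \lang(\deriv{s}{x})$, iff $w \in \lang(\deriv{r}{x}) - \lang(\deriv{s}{x})$, iff (by Lemma~\ref{le:sim-re}) $w \in \lang(\simp{\deriv{r}{x}}) - \lang(\simp{\deriv{s}{x}})$. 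Finally, $x \conc w$ belongs to the right-hand side sum iff it belongs to the unique summand indexed by its leading symbol $x$, i.e.\ iff $w \in \lang(\simp{\deriv{r}{x}}) - \lang(\simp{\deriv{s}{x}})$. Matching the two chains closes the case.

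I do not expect a serious obstacle: the whole argument is a routine membership computation driven by the derivative's defining equation. The one point requiring care is to justify replacing $\deriv{r}{x}$ by its canonical form $\simp{\deriv{r}{x}}$ in the language expression, which is exactly what Lemma~\ref{le:sim-re} supplies; without invoking it, the statement as written (with $\simp{\cdot}$ on the right) would not follow from the bare derivative identity. A secondary bookkeeping point is ensuring the leading-symbol decomposition of $\Sigma^*$ is used to match summands correctly, so that no word is double-counted across the $x \conc (\dots)$ terms and the $\varepsilon$ term $T$.
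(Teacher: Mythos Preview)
Your proof is correct, but it takes a different route from the paper's. You argue semantically by a direct word-membership case split (empty vs.\ non-empty word) and invoke the defining left-quotient property of the derivative together with Lemma~\ref{le:sim-re}. The paper instead argues algebraically: it first applies the Expansion Theorem~\ref{th:representation} to both $r$ and $s$ to write each as $\sum_{x} x \conc \simp{\deriv{\cdot}{x}} + t$, and then simplifies the difference of the two sums using a small catalogue of algebraic laws for subtraction over regular languages (distributivity of $-$ over $+$ on the left and right, $(x \conc R) - (x \conc S) \semeq x \conc (R - S)$, $(x \conc R) - (y \conc S) \semeq x \conc R$ for $x \neq y$, etc.). Your approach is more elementary and self-contained, needing only the derivative's defining equation; the paper's approach makes the algebraic structure explicit and is the template reused verbatim for the companion lemmas on intersection and shuffling (Lemmas~\ref{le:intersect-nf} and~\ref{le:shuffle-nf}), where one just swaps in the corresponding law list.
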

\begin{proof}
  By the Expansion Theorem~\ref{th:representation} and Lemma~\ref{le:sim-re},
  we find that
  $r \semeq \sum_{x \in \Sigma} x \conc \simp{\deriv{r}{x}} + t$
  and $s \semeq \sum_{x \in \Sigma} x \conc \simp{\deriv{s}{x}} + t'$
  where $t = \varepsilon$ if $r$ is nullable. Otherwise, $t=\phi$.
  For $t'$ we find $t' = \varepsilon$ if $s$ is nullable. Otherwise, $t'=\phi$.

  By associativity, commutativity of $+$ and some standard algebraic laws
\bda{c}
(x \conc R) - (x \conc S) \semeq x \conc (R - S)
\\
(x \conc R) - (y \conc S) \semeq x \conc R   \ \ \ \ \ \  \mbox{where $x \not= y$}
\\
R - \phi \semeq R
\\
(R + S) - T \semeq (R - T) + (S - T)
\\
R - (S + T) \semeq (R - S) - T
\eda
the result follows immediately. \qed
\end{proof}

\begin{thm}[Subtraction]
\label{theo:subtract}  
  Let $r, s$ be two regular expressions.
  Then, we find that $r-s$ is unambiguous and $\lang(r-s) \semeq \lang(r) - \lang(s)$.
\end{thm}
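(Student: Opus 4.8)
The plan is to establish the two claims of Theorem~\ref{theo:subtract} separately: the semantic correctness $\lang(r-s) \semeq \lang(r) - \lang(s)$, and the syntactic unambiguity of $r-s$. For the correctness part, I would argue that the substitution $\psi = \solve{{\cal S}_{r,s}}$ assigns to each variable $R_{r',s'}$ a regular expression whose language is exactly $\lang(r') - \lang(s')$. The natural vehicle is to show that the intended assignment $R_{r',s'} \mapsto \lang(r') - \lang(s')$ is itself a solution (in the semantic sense) of the equation system ${\cal S}_{r,s}$, by checking that it satisfies each characteristic equation. This is precisely the content of Lemma~\ref{le:subtract-nf}: for the non-empty case, the defining equation $R_{r',s'} \regEq \sum_{x \in \Sigma} x \conc R_{\simp{\deriv{r'}{x}}, \simp{\deriv{s'}{x}}} + t$ mirrors the language identity $\lang(r') - \lang(s') \semeq \sum_{x} x \conc (\lang(\simp{\deriv{r'}{x}}) - \lang(\simp{\deriv{s'}{x}})) + T$, using that similar expressions denote the same language (Lemma~\ref{le:sim-re}) so that $\deriv{r'}{x}$ and $\simp{\deriv{r'}{x}}$ are interchangeable. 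The empty case $\lang(r') = \emptyset$ matches $R_{r',s'} \regEq \phi$. Since Theorem~1 (Regular Equation Solutions) guarantees that $\solve{{\cal S}_{r,s}}$ produces \emph{a} solution and all solutions are semantically equivalent (as noted after Example~\ref{ex:solve1}), we conclude $\lang(\psi(R_{r',s'})) \semeq \lang(r') - \lang(s')$ for every pair; specializing to $(r,s)$ gives $\lang(r-s) = \lang(\psi(R_{r,s})) \semeq \lang(r) - \lang(s)$.

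The one subtlety I would flag here is \emph{well-definedness of the coalgebraic closure}: one must verify that whenever $R_{r',s'}$ appears in ${\cal S}_{r,s}$, every variable $R_{\simp{\deriv{r'}{x}}, \simp{\deriv{s'}{x}}}$ occurring on its right-hand side is again indexed by a pair in $\Desc{r} \times \Desc{s}$, so that condition (2) of Definition~\ref{def:regular-equations} holds and Theorem~1 applies. This follows because $\simp{\deriv{r'}{x}}$ is by construction the canonical representative of a descendant of $r$ (a derivative of a descendant is a descendant), hence lies in $\Desc{r}$, and symmetrically for $s$. This closure property is what makes ${\cal S}_{r,s}$ a legitimate finite equation set.

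For the unambiguity part, I would invoke Theorem~\ref{theo:reg-solve-unambig} (Unambiguous Solutions), which states that solutions of non-overlapping equation sets are unambiguous. It therefore suffices to check that ${\cal S}_{r,s}$ is non-overlapping in the sense of Definition (Non-Overlapping Equations). Inspecting the defining form: each non-trivial equation is $R_{r',s'} \regEq \sum_{x \in \Sigma} x \conc R_{\simp{\deriv{r'}{x}}, \simp{\deriv{s'}{x}}} + t$ with $t \in \{\varepsilon, \phi\}$, which is literally the shape $x_1 \conc R_1 + \dots + x_n \conc R_n + t$ demanded by the definition, and the leading symbols $x_1, \dots, x_n$ are the distinct elements of $\Sigma$, so the condition $x_i \neq x_j$ for $i \neq j$ is immediate. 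The trivial equations $R_{r',s'} \regEq \phi$ correspond to the $t = \phi$ case (equivalently an empty sum), which is also non-overlapping. Hence ${\cal S}_{r,s}$ is non-overlapping, and Theorem~\ref{theo:reg-solve-unambig} yields that $\psi(R_{r,s}) = r - s$ is unambiguous.

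I expect the main obstacle to be the bookkeeping around canonicalization in the correctness argument, rather than any deep difficulty. Concretely, the equation system is written in terms of the \emph{canonical} derivatives $\simp{\deriv{r'}{x}}$, whereas the clean language identity of Lemma~\ref{le:subtract-nf} is phrased with the same canonical forms; reconciling these requires the observation that $\lang(\simp{\deriv{r'}{x}}) = \lang(\deriv{r'}{x})$ by Lemma~\ref{le:sim-re}, together with the finiteness guarantee (Theorem, Finiteness) that ensures only finitely many distinct canonical pairs arise so the recursion genuinely terminates and $\solve{\cdot}$ is applicable. Once these facts are lined up, both halves of the theorem reduce to citing previously established results, and no fresh calculation is needed.
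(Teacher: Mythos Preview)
Your unambiguity argument is correct and matches the paper exactly: the equations in ${\cal S}_{r,s}$ are manifestly non-overlapping, so Theorem~\ref{theo:reg-solve-unambig} applies.

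For semantic correctness, however, your route diverges from the paper's and contains a gap. You argue that the assignment $R_{r',s'} \mapsto \lang(r') - \lang(s')$ satisfies the equations (this is indeed Lemma~\ref{le:subtract-nf}), that $\psi = \solve{{\cal S}_{r,s}}$ also satisfies them (Theorem~1), and then conclude they coincide because ``all solutions are semantically equivalent (as noted after Example~\ref{ex:solve1})''. That last step misreads the remark: what is said after Example~\ref{ex:solve1} is only that different \emph{orderings of the solving algorithm} yield semantically equal output, not that any two substitutions satisfying the equations denote the same languages. The paper in fact explicitly states, right after Arden's Lemma, that it uses only the right-to-left direction; uniqueness of solutions is never established. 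Your argument is repairable---the leading coefficients in ${\cal S}_{r,s}$ are single letters, hence non-nullable, so the full Arden iff applies and one can show each solving step preserves the solution set bijectively---but as written the appeal to Example~\ref{ex:solve1} does not close the gap.

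The paper avoids the uniqueness question altogether via a coalgebraic argument: it shows that the relation
\[
\{\, (\lang(\psi(R_{r',s'})),\ \lang(r') - \lang(s')) \mid (r',s') \in \Desc{r} \times \Desc{s} \,\}
\]
is a bisimulation, i.e.\ both components agree on nullability, and taking the derivative with respect to any letter keeps the pair inside the relation (using that $\psi$ satisfies the defining equation on one side and Lemma~\ref{le:subtract-nf} on the other). Bisimilar languages are equal, so $\lang(\psi(R_{r,s})) = \lang(r) - \lang(s)$. This uses only that $\psi$ is \emph{some} solution, sidestepping any need for uniqueness.
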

\begin{proof}
  By construction, equations are non-overlapping. Unambiguity follows
  from Theorem~\ref{theo:reg-solve-unambig}.
  
  We prove the equivalence claim via a coalgebraic proof method~\cite{DBLP:conf/lata/RotBR13}. We show that the relation
  $\{ (\lang(\psi(R_{r',s'})), \lang(r') - \lang(s')) \mid (r',s') \in \Desc{r} \times \Desc{s} \}$
  is a bisimulation where $\psi = \solve{{\cal S}_{r,s}}$.
  For that to hold two elements are in relation if either (1) they are both nullable,
  or (2) their derivatives, i.e.~taking away the same leading literal, are again in relation.

  Consider a pair $(\lang(\psi(R_{r',s'})), \lang(r') - \lang(s'))$.
  For $\lang(r') = \emptyset$ we have that $R_{r',s'} \approx \phi$.
  The conditions imposed on a bisimulation follow immediately.

  Otherwise, $R_{r',s'}$ is defined by the equation 
  \bda{cr}
      R_{r',s'} \regEq \sum_{x \in \Sigma} x \conc R_{\simp{\deriv{r'}{x}}, \simp{\deriv{s'}{x}}} + t  & \ \ \ \ (E1)
  \eda
  where $t = \varepsilon$ if $\varepsilon \in \lang(r'), \varepsilon \not\in \lang(s')$,
  otherwise $t = \phi$.
  From Lemma~\ref{le:subtract-nf} we can conclude that
  \bda{cr}
    \lang(r) - \lang(s) \semeq \sum_{x \in \Sigma} x \conc (\lang(\simp{\deriv{r}{x}}) - \lang(\simp{\deriv{s}{x}})) + T & \ \ \ \ (E2)
  \eda
  where $T = \{ \varepsilon \}$ if $\varepsilon \in \lang(r), \varepsilon \not\in \lang(s)$,
  otherwise $T = \emptyset$.
  Immediately, we find that if one component of the pair is nullable, the other one must be nullable as well.

  We build the derivative for each component w.r.t.~some literal $x$.
  Given that $\psi$ is a solution and via (E1) and (E2) the resulting derivatives
  are equal to $\lang(\psi(R_{\simp{\deriv{r'}{x}}, \simp{\deriv{s'}{x}}}))$
  and $\lang(\simp{\deriv{r}{x}}) - \lang(\simp{\deriv{s}{x}})$.
  Hence, derivatives are again in relation.
  This concludes the proof. \qed
\end{proof}

\begin{example}
   We consider $r_1 =(x+y)^*$ and $r_2  = (x \conc x)^*$.
   Let us consider first the canonical descendants of both expressions.
   \bda{lclcl}
   \derivC{(x+y)^*}{x} & = & (x+y)^* 
   \\
   \derivC{(x+y)^*}{y} & = & (x+y)^*
   \\
   \derivC{(x \conc x )^*}{x} & = & x \conc (x \conc x)^* & = r_3
   \\
   \derivC{(x \conc x )^*}{y} & = & \phi & = r_4
   \\
   \deriv{x \conc (x \conc x)^*}{x} & = & (x \conc x)^*
   \\
   \deriv{x \conc (x \conc x)^*}{y} & = & \phi
   \eda
 
  The resulting equations are as follows. 
 \bda{lcl}
 R_{1,2}  & = & x \conc R_{1,3} + y \conc R_{1,4} + \phi
 \\
 R_{1,3} & = & x \conc R_{1,2} + y \conc R_{1,4} + \varepsilon
 \\
 R_{1,4} & = & r_1
 \eda
 
 Solving of the above proceeds as follows. We first apply $R_{1,4} = r_1$.
 \bda{lcl}
 R_{1,2}  & = & x \conc R_{1,3} + y \conc r_1 + \phi
 \\
 R_{1,3} & = & x \conc R_{1,2} + y \conc r_1 + \varepsilon
 \eda
 
 Next, we remove the equation for $R_{1,3}$ and apply some simplifications.
 \bda{lcl}
 R_{1,2}  & = & x \conc x \conc R_{1,2} + x \conc y \conc r_1 + x + y \conc r_1 
 \eda
 Via Arden's Lemma we find that $R_{1,2} = (x \conc x)^* \conc (x \conc y \conc r_1 + x + y \conc r_1)$
 and we are done.
 \end{example}

\subsection{Shuffle}

\begin{defn}[Shuffle] 
\label{def:shuffling}
The \emph{shuffle operator} $\shuffle :: \Sigma^* \times \Sigma^* \rightarrow \power(\Sigma^*)$
is defined inductively as follows:
   \bda{lcl}
   \epsilon \shuffle w & = & \{ w\}
 \\   w \shuffle \epsilon & = & \{ w\}
  \\ x \conc v \shuffle y \conc w & = & \{ x \conc u \mid u \in v \shuffle y \conc w \} 
    \cup \{ y \conc u \mid u \in x \conc v \shuffle w \}
  \\ 
 \eda
%
 We lift shuffling to languages by
$$
L_1 \shuffle L_2 = \{ u \mid u \in v \shuffle w \wedge
                                 v \in L_1 \wedge w \in L_2 \}
$$
\end{defn}
For example, we find that $x \conc y \shuffle z = \{ x\conc y \conc z, x \conc z \conc y, z \conc x \conc y \}$.

\begin{defn}[Equations for Shuffling]
\label{def:eq-shuffle}  
  Let $r, s$ be two regular expressions.
  For each pair $(r',s') \in \Desc{r} \times \Desc{s}$
  we introduce a variable $R_{r',s'}$.
  For each such $R_{r',s'}$ we define an equation of the following form.
  If $\lang(r') = \emptyset$, we set $R_{r',s'} \regEq \phi$.
  Otherwise, $R_{r',s'} \regEq \sum_{x \in \Sigma} (x \conc R_{\simp{\deriv{r'}{x}}, s'} + x \conc R_{r', \simp{\deriv{s'}{x}}} ) + t$
  where $t = t_1 + t_2$.
  Expression $t_1 = s'$ if $\varepsilon \in \lang(r')$,
  otherwise $t_1 = \phi$.
  Expression $t_2 = r'$  if $\varepsilon \in \lang(s')$,
  otherwise $t_2 = \phi$.
  All equations are collected in a set ${\cal H}_{r,s}$.

  Let $\psi = \solve{{\cal H}_{r,s}}$.
  Then, we define $r \shuffle s = \psi(R_{r,s})$.
\end{defn}

\begin{lem}
\label{le:shuffle-nf}  
  Let $r, s$ be two regular expressions.
  Then, we find that
  \bda{c}
\lang(r) \shuffle \lang(s) \semeq \sum_{x \in \Sigma} (x \conc (\lang(\simp{\deriv{r}{x}}) \shuffle \lang(s)) + x \conc (\lang(r) \shuffle \lang(\simp{\deriv{s}{x}}))) + T
\eda
where $T = T_1 + T_2$.
$T_1 = s$ if $\varepsilon \in \lang(r)$,
otherwise $T_1 = \phi$.
$T_2 = r$ if $\varepsilon \in \lang(s)$,
otherwise $T_2 = \phi$.
\end{lem}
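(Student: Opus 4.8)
The plan is to reduce the stated expression-level identity to a pure language identity and then prove that identity from the recursive structure of $\shuffle$, mirroring the role that the Expansion Theorem~\ref{th:representation} plays in the proof of Lemma~\ref{le:subtract-nf}. First I would replace each $\lang(\simp{\deriv{r}{x}})$ by the left quotient $\deriv{\lang(r)}{x} = \{ w \mid x \conc w \in \lang(r) \}$; this is justified since $\lang(\simp{\deriv{r}{x}}) = \lang(\deriv{r}{x}) = \deriv{\lang(r)}{x}$, using Lemma~\ref{le:sim-re} (similarity respects equivalence) together with the defining property of Brzozowski derivatives. After this step the goal becomes the set-theoretic claim that, writing $L_1 = \lang(r)$ and $L_2 = \lang(s)$,
\[
L_1 \shuffle L_2 = \bigcup_{x \in \Sigma} \big( x \conc (\deriv{L_1}{x} \shuffle L_2) \cup x \conc (L_1 \shuffle \deriv{L_2}{x}) \big) \cup T_1 \cup T_2,
\]
where $T_1 = L_2$ if $\varepsilon \in L_1$ (else $\emptyset$) and $T_2 = L_1$ if $\varepsilon \in L_2$ (else $\emptyset$).

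The technical heart is a Leibniz-style rule for the quotient of a shuffle, $\deriv{L_1 \shuffle L_2}{x} = (\deriv{L_1}{x} \shuffle L_2) \cup (L_1 \shuffle \deriv{L_2}{x})$, together with the nullability characterisation that $\varepsilon \in L_1 \shuffle L_2$ iff $\varepsilon \in L_1$ and $\varepsilon \in L_2$. I would prove the Leibniz rule directly from Definition~\ref{def:shuffling}: for a nonempty word $x \conc u \in v \shuffle w$ the leading $x$ must originate either from $v$ or from $w$, which after stripping $x$ places $u$ into $\deriv{L_1}{x} \shuffle L_2$ or into $L_1 \shuffle \deriv{L_2}{x}$; the two base cases $\varepsilon \shuffle w = \{w\}$ and $w \shuffle \varepsilon = \{w\}$ are absorbed by the unit laws $\{\varepsilon\} \shuffle L = L = L \shuffle \{\varepsilon\}$. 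The converse inclusion simply prepends $x$ back on the appropriate side.

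With these two facts in hand I would finish by double inclusion. For $\subseteq$, take $u \in L_1 \shuffle L_2$: if $u = \varepsilon$ then both $L_1$ and $L_2$ are nullable, so $\varepsilon \in L_2 = T_1$; if $u = x \conc u'$ then the Leibniz rule puts $u'$ into $\deriv{L_1}{x} \shuffle L_2$ or $L_1 \shuffle \deriv{L_2}{x}$, so $u$ lies in the $\bigcup_x$ part. For $\supseteq$, the $\bigcup_x$ part is contained in $L_1 \shuffle L_2$ by the easy direction of the Leibniz rule, while $T_1 = L_2$ (when $\varepsilon \in L_1$) sits inside $L_1 \shuffle L_2$ via $\{\varepsilon\} \shuffle L_2 = L_2$, and symmetrically for $T_2$.

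The step I expect to be the main obstacle is getting the tail $T = T_1 \cup T_2$ exactly right. Observe that $T_1$ is the \emph{whole} language $s$, not merely $\{\varepsilon\}$: this is forced by the base case $\varepsilon \shuffle w = \{w\}$, where the empty word drawn from a nullable $L_1$ contributes all of $L_2$. Consequently $T_1$ and $T_2$ overlap heavily with the $\bigcup_x$ part, so the decomposition is deliberately non-disjoint. Reasoning by double inclusion, rather than trying to match an exact disjoint expansion with a single $\varepsilon$-correction (as a naive use of Theorem~\ref{th:representation} would suggest), is precisely what keeps this over-generous but harmless $T$ from causing trouble. The resulting identity has exactly the shape of the equations in Definition~\ref{def:eq-shuffle}, so it will feed directly into the bisimulation argument for the shuffle correctness theorem, just as Lemma~\ref{le:subtract-nf} feeds Theorem~\ref{theo:subtract}.
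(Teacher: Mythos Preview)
Your argument is correct. The Leibniz rule $\deriv{L_1 \shuffle L_2}{x} = (\deriv{L_1}{x} \shuffle L_2) \cup (L_1 \shuffle \deriv{L_2}{x})$ and the nullability characterisation are exactly the right ingredients, and your double-inclusion argument handles the over-generous tail $T$ cleanly.

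The paper takes a somewhat different, more algebraic route. Following the template of Lemma~\ref{le:subtract-nf}, it expands \emph{both} operands via Theorem~\ref{th:representation}, writing $\lang(r) \semeq \sum_x x \conc \lang(\simp{\deriv{r}{x}}) + t$ and similarly for $s$, and then pushes the shuffle through the resulting sum using a small list of stated algebraic laws for $\shuffle$ (commutativity, the zero and unit laws, distributivity over $+$, and the key interleaving law $(x \conc R) \shuffle (y \conc S) \semeq x \conc (R \shuffle (y \conc S)) + y \conc ((x \conc R) \shuffle S)$). Your approach instead applies the expansion only once, to $L_1 \shuffle L_2$ itself, after first establishing the Leibniz rule directly from Definition~\ref{def:shuffling}. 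The two are essentially dual: the paper's interleaving law is the pointwise version of your Leibniz rule, so the mathematical content is the same, but your packaging avoids the bookkeeping of expanding both sides and then regrouping cross terms, at the cost of a short inductive proof of the Leibniz rule. Either way the lemma feeds into the bisimulation for Theorem~\ref{theo:shuffle} exactly as you anticipate.
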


\begin{thm}[Shuffling]
\label{theo:shuffle}  
  Let $r, s$ be two regular expressions.
  Then, we find that  $\lang(r \shuffle s) \semeq \lang(r) \shuffle \lang(s)$.
\end{thm}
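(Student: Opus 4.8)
The plan is to mirror the structure of the subtraction proof (Theorem~\ref{theo:subtract}) and establish the language equivalence $\lang(r \shuffle s) \semeq \lang(r) \shuffle \lang(s)$ by a coalgebraic bisimulation argument, now drawing on Lemma~\ref{le:shuffle-nf} in place of Lemma~\ref{le:subtract-nf}. First I would recall that $r \shuffle s = \psi(R_{r,s})$ where $\psi = \solve{{\cal H}_{r,s}}$, and that by construction $\psi$ is a solution for the equation set ${\cal H}_{r,s}$ (this existence is guaranteed since $\Desc{r} \times \Desc{s}$ is finite). Note that, unlike the subtraction case, I do \emph{not} claim unambiguity here: the shuffle equations are overlapping in general (each literal $x$ may label two distinct transitions, one deriving the left component and one the right), so only the equivalence claim is stated and proved.

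The core of the proof is to exhibit the relation
\bda{c}
{\cal B} = \{ (\lang(\psi(R_{r',s'})),\ \lang(r') \shuffle \lang(s')) \mid (r',s') \in \Desc{r} \times \Desc{s} \}
\eda
and show it is a bisimulation in the coalgebraic sense of~\cite{DBLP:conf/lata/RotBR13}: two languages are related only if (1) they agree on nullability, and (2) for every literal $x \in \Sigma$, their Brzozowski derivatives with respect to $x$ are again ${\cal B}$-related. Since $(r,s) \in \Desc{r} \times \Desc{s}$, the pair $(\lang(\psi(R_{r,s})), \lang(r) \shuffle \lang(s))$ lies in ${\cal B}$, so bisimilarity yields the desired equivalence.

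To verify the two bisimulation conditions I would proceed exactly as in the subtraction proof. Fix a pair with defining equation for $R_{r',s'}$. If $\lang(r') = \emptyset$ the shuffle is empty and both components coincide trivially. Otherwise, the equation from Definition~\ref{def:eq-shuffle} gives the algebraic expansion of $\lang(\psi(R_{r',s'}))$, while Lemma~\ref{le:shuffle-nf} gives the matching expansion of $\lang(r') \shuffle \lang(s')$; comparing the trailing summand $t = t_1 + t_2$ against $T = T_1 + T_2$ settles the nullability condition (both are nullable iff $r'$ and $s'$ are both nullable). For condition (2), I would take the derivative with respect to each $x \in \Sigma$: because $\psi$ is a solution, differentiating $\lang(\psi(R_{r',s'}))$ strips off the leading $x$ from each summand $x \conc R_{\simp{\deriv{r'}{x}},s'}$ and $x \conc R_{r',\simp{\deriv{s'}{x}}}$, leaving exactly $\lang(\psi(R_{\simp{\deriv{r'}{x}},s'}))$ unioned with $\lang(\psi(R_{r',\simp{\deriv{s'}{x}}}))$, and the same differentiation applied to Lemma~\ref{le:shuffle-nf} yields the matching union of shuffles; these paired components are again in ${\cal B}$.

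The main obstacle I anticipate is the correctness of Lemma~\ref{le:shuffle-nf} itself, on which the whole argument rests. Establishing that $\lang(r) \shuffle \lang(s)$ satisfies the stated derivative-style recurrence requires a careful case analysis on whether each of $r,s$ is nullable, using the recursive clauses of Definition~\ref{def:shuffling} and the fact that a first literal of an interleaving may come from either the $r$-side or the $s$-side (producing the two symmetric summands). The subtle points are correctly accounting for the empty-word cases $\epsilon \shuffle w$ and $w \shuffle \epsilon$, which feed the trailing terms $T_1 = s$ and $T_2 = r$, and verifying that the left quotient $\{ w \mid x \conc w \in \lang(r) \shuffle \lang(s)\}$ distributes over the union as claimed. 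Once Lemma~\ref{le:shuffle-nf} is in hand, the bisimulation verification is routine and entirely parallel to Theorem~\ref{theo:subtract}.
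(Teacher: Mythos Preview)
Your proposal is correct and follows essentially the same approach as the paper: the paper's proof simply states that it is ``similar to the proof of Theorem~\ref{theo:subtract}'', relying on Lemma~\ref{le:shuffle-nf} for the coalgebraic expansion, and explicitly notes that unambiguity fails because the shuffle equations are overlapping---exactly as you anticipated. Your write-up is in fact more detailed than the paper's own one-line proof.
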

%

\section{Related Works and Conclusion}
\label{sec:related-works}
\label{sec:conclusion}

Our work gives a precise description of solving of regular equations
including a computational interpretation by means of parse tree transformations.
Thus, we can characterize conditions under which regular equations and resulting
regular expressions are unambiguous.

Earlier work by Gruber and Holzer~\cite{DBLP:journals/ijfcs/GruberH15}
gives a comprehensive overview
on the conversion of finite automaton to regular expressions and vice versa.
Like many other works~\cite{c-neumann18,321249}, the algorithmic details
of solving regular equations based on Arden's Lemma are not specified in detail.

Brzozowski's and McCluskey's~\cite{DBLP:journals/tc/BrzozowskiM63}
state elimination method appears to be the more popular and more widespread method.
For example, consider work by
Han~\cite{Han:2013:SEH:2595015.2595018} and in collaboration with Ahn~\cite{Ahn:2009:ISE:1577697.1577720}, as well as work by
Moreira, Nabais and Reis~\cite{DBLP:journals/corr/abs-1008-1656}
that discuss state elimination heuristics to achieve short regular expressions.

Sakarovitch~\cite{DBLP:books/daglib/0023547,automata-rational-expressions} shows
that the state elimination and solving via regular equation methods are isomorphic
and produce effectively the same result. Hence, our (unambiguity) results are transferable
to the state elimination setting. The other way around,
state elimination heuristics are applicable as demonstrated by our implementation.

It is well understood how to build the subtraction and intersection
among DFAs via the product automaton construction~\cite{Hopcroft:2006:IAT:1196416}.
If we wish to apply these operations among regular expressions
we need to convert expressions back and forth to DFAs.
For example, we can convert a regular expression into a DFA
using Brzozowski's derivatives~\cite{321249} 
and then use Brzozowski's algebraic method to convert back
the product automaton to a regular expression.

To build the shuffle among two regular expressions,
the standard method is to (1) build the shuffle derivative-based DFA,
(2) turn this DFA into some regular equations and then
(3) solve these regular equations.
Step (1) relies on the property that the canonical derivatives
for shuffle expressions are finite.

In our own work~\cite{DBLP:journals/jcss/SulzmannT19},
we establish finiteness for several variations of the shuffle operator.
Caron, Champarnaud and Mignot~\cite{DBLP:journals/ita/CaronCM14} and
Thiemann~\cite{DBLP:conf/wia/Thiemann16}
establish finiteness of derivatives for an even larger class
of regular expression operators.

We propose direct methods to build the intersection and
the shuffle among two regular expressions.
For each operation we generate an appropriate set of equations
by employing Brzozowski derivatives. We only rely
on finiteness of canonical derivatives for standard regular expressions.
Solving of these equations then yields the desired expression.
Correctness follows via some simple (co)algebraic reasoning and
we can guarantee that resulting expressions are unambiguous.





\section*{Acknowledgments}

We thank referees for CIAA'18, ICTAC'18 and ICTAC'19
for their helpful comments on previous versions of this paper.

\bibliography{main}

\begin{thebibliography}{10}

\bibitem{Ahn:2009:ISE:1577697.1577720}
Jae-Hee Ahn and Yo-Sub Han.
\newblock Implementation of state elimination using heuristics.
\newblock In {\em Proc.\ of CIAA'09}, pages 178--187. Springer, 2009.

\bibitem{ALMEIDA200793}
Marco Almeida, Nelma Moreira, and Rog\'{e}rio Reis.
\newblock Enumeration and generation with a string automata representation.
\newblock {\em Theoretical Computer Science}, 387(2):93 -- 102, 2007.
\newblock Descriptional Complexity of Formal Systems.

\bibitem{DBLP:conf/focs/Arden61}
Dean~N. Arden.
\newblock Delayed-logic and finite-state machines.
\newblock In {\em 2nd Annual Symposium on Switching Circuit Theory and Logical
  Design, Detroit, Michigan, USA, October 17-20, 1961}, pages 133--151, 1961.

\bibitem{DBLP:conf/ppdp/BrabrandT10}
Claus Brabrand and Jakob~G. Thomsen.
\newblock Typed and unambiguous pattern matching on strings using regular
  expressions.
\newblock In {\em Proc.\ of PPDP'10}, pages 243--254. {ACM}, 2010.

\bibitem{321249}
Janusz~A. Brzozowski.
\newblock Derivatives of regular expressions.
\newblock {\em J. ACM}, 11(4):481--494, 1964.

\bibitem{DBLP:journals/tc/BrzozowskiM63}
Janusz~A. Brzozowski and Edward~J. McCluskey.
\newblock Signal flow graph techniques for sequential circuit state diagrams.
\newblock {\em {IEEE} Trans. Electronic Computers}, 12(2):67--76, 1963.

\bibitem{DBLP:journals/ita/CaronCM14}
Pascal Caron, Jean{-}Marc Champarnaud, and Ludovic Mignot.
\newblock A general framework for the derivation of regular expressions.
\newblock {\em {RAIRO} - Theor. Inf. and Applic.}, 48(3):281--305, 2014.

\bibitem{DBLP:conf/wia/DelgadoM04}
Manuel Delgado and Jos{\'{e}} Morais.
\newblock Approximation to the smallest regular expression for a given regular
  language.
\newblock In {\em Proc.\ of CIAA'04}, pages 312--314. Springer, 2004.

\bibitem{cduce-icalp04}
Alain Frisch and Luca Cardelli.
\newblock Greedy regular expression matching.
\newblock In {\em Proc. of ICALP'04}, pages 618-- 629. Springer, 2004.

\bibitem{Grabmayer:2005:UPC:2156157.2156171}
Clemens Grabmayer.
\newblock Using proofs by coinduction to find "traditional" proofs.
\newblock In {\em Proc.\ of CALCO'05}, pages 175--193. Springer, 2005.

\bibitem{DBLP:journals/ijfcs/GruberH15}
Hermann Gruber and Markus Holzer.
\newblock From finite automata to regular expressions and back - {A} summary on
  descriptional complexity.
\newblock {\em Int. J. Found. Comput. Sci.}, 26(8):1009--1040, 2015.

\bibitem{Han:2013:SEH:2595015.2595018}
Yo-Sub Han.
\newblock State elimination heuristics for short regular expressions.
\newblock {\em Fundam. Inf.}, 128(4):445--462, October 2013.

\bibitem{Hopcroft:2006:IAT:1196416}
John~E. Hopcroft, Rajeev Motwani, and Jeffrey~D. Ullman.
\newblock {\em Introduction to Automata Theory, Languages, and Computation (3rd
  Edition)}.
\newblock Addison-Wesley Longman Publishing Co., Inc., Boston, MA, USA, 2006.

\bibitem{regex-symbolic}
Kenny Z.~M. Lu and Martin Sulzmann.
\newblock {Solving Regular Expression Equations}.
\newblock \\ \verb+http://github.com/luzhuomi/regex-symb+.

\bibitem{DBLP:journals/corr/abs-1008-1656}
Nelma Moreira, Davide Nabais, and Rog{\'{e}}rio Reis.
\newblock State elimination ordering strategies: Some experimental results.
\newblock In {\em Proc.\ of DCFS'10}, volume~31 of {\em {EPTCS}}, pages
  139--148, 2010.

\bibitem{c-neumann18}
Christoph Neumann.
\newblock Converting deterministic finite automata to regular expressions.
\newblock
  \verb+http://citeseerx.ist.psu.edu/viewdoc/summary?doi=10.1.1.85.2597+, March
  2005.

\bibitem{DBLP:conf/lata/RotBR13}
Jurriaan Rot, Marcello~M. Bonsangue, and Jan J. M.~M. Rutten.
\newblock Coinductive proof techniques for language equivalence.
\newblock In {\em Proc.\ of LATA'13}, pages 480--492. Springer, 2013.

\bibitem{DBLP:books/daglib/0023547}
Jacques Sakarovitch.
\newblock {\em Elements of Automata Theory}.
\newblock Cambridge University Press, 2009.

\bibitem{automata-rational-expressions}
Jacques Sakarovitch.
\newblock Automata and rational expressions.
\newblock \\ \verb+https://arxiv.org/abs/1502.03573+, 2015.

\bibitem{DBLP:conf/flops/SulzmannL14}
Martin Sulzmann and Kenny Zhuo~Ming Lu.
\newblock {POSIX} regular expression parsing with derivatives.
\newblock In {\em Proc.\ of FLOPS'14}, pages 203--220. Springer, 2014.

\bibitem{DBLP:journals/ijfcs/SulzmannL17}
Martin Sulzmann and Kenny Zhuo~Ming Lu.
\newblock Derivative-based diagnosis of regular expression ambiguity.
\newblock {\em Int. J. Found. Comput. Sci.}, 28(5):543--562, 2017.

\bibitem{DBLP:journals/jcss/SulzmannT19}
Martin Sulzmann and Peter Thiemann.
\newblock Derivatives and partial derivatives for regular shuffle expressions.
\newblock {\em J. Comput. Syst. Sci.}, 104:323--341, 2019.

\bibitem{DBLP:conf/wia/Thiemann16}
Peter Thiemann.
\newblock Derivatives for enhanced regular expressions.
\newblock In {\em Proc.\ of CIAA'16}, pages 285--297. Springer, 2016.

\end{thebibliography}

\pagebreak

\appendix

\section{Proofs}

\subsection{Proof of Proposition~\ref{prop:regeq-unambig}}

\begin{proof}
For non-overlapping equations there can be at most one $v$
such that ${\cal E} \turns v : R $.
Suppose ${\cal E} \turns v: R$ where $R \regEq x_1 \conc R_1 + \dots + x_n \conc R_n + t$.
Recall that $+$ is right-associative.
From ${\cal E} \turns v : R$ we conclude that ${\cal E} \turns v' : x_1 \conc R_1 + \dots + x_n \conc R_n + t$
where $v = \Fold \ v'$ for some $v'$.
For $\flatten{v'} = \varepsilon$ (empty word), $t$ must be nullable. Based
on the choice for $t$, we must have that $t = \varepsilon$. Hence, the choice for $v'$ is fixed.
Consider $\flatten{v'} = x \conc w$ for some literal $x$ and word $w$.
Again the choice for $v'$ is fixed because due to non-overlapping there is at most one $i$
such that $x = x_i$.
\qed
\end{proof}

\subsection{Proof of Lemma~\ref{le:shuffle-nf}}

\begin{proof}

  We employ the following algebraic laws.
  \bda{c}
  R \shuffle S \semeq S \shuffle R
  \\
  R \shuffle \emptyset \semeq \emptyset
  \\
  R \shuffle \varepsilon \semeq R
  \\
  (x \conc R) \shuffle (y \conc S) \semeq (x \conc (R \shuffle (y \conc S))) +
  (y \conc ((x \conc R) \shuffle S))
  \\
  (R + S) \shuffle T \semeq (R \shuffle T) + (S \shuffle T)
  \eda
 \qed
\end{proof}

\subsection{Proof of Theorem~\ref{theo:shuffle}}

\begin{proof}
  Similar to the proof of Theorem~\ref{theo:subtract}.

  Unambiguity may no longer hold because
equations to compute $r \shuffle s$ are overlapping.
  \qed
\end{proof}

\section{Parsing with Regular Equations}
\label{sec:parse-reg-eq}

We build a parser following
the scheme of a derivative-style regular expression parser.
Soundness results reported in~\cite{DBLP:conf/flops/SulzmannL14}
carry over to the extended setting.
That is, if a parse exists the parser will succeed.
It is possible to compute all parse trees following
the scheme outlined in~\cite{DBLP:journals/ijfcs/SulzmannL17}.
For brevity, we omit the details.

\begin{defn}[Regular Expressions and Equations Derivatives]
 We assume a fixed set $\EE$ of equations.
\bda{ll}
\deriv{\phi}{x} = \phi
 & 
 \deriv{\varepsilon}{x} = \phi
 \\
 \\
  \deriv{y}{x} = \left \{ \ba{ll} \varepsilon & \mbox{if $x=y$}
                          \\      \phi        & \mbox{otherwise}
                          \ea
                          \right.
 & \deriv{\alpha + \beta}{x} = \deriv{\alpha}{x} + \deriv{\beta}{x}
\\                          
\deriv{\gamma_1 \conc \gamma_2}{x} = \left \{ \ba{ll}  \deriv{\gamma_1}{x} \conc s & \mbox{if $\varepsilon \not\in \lang(\gamma_1)$}
\\  \deriv{\gamma_1}{x} \conc \gamma_2 + \deriv{\gamma_2}{x} & \mbox{otherwise}
\ea
\right.
 & 
\deriv{r^*}{x} = \deriv{r}{x} \conc r^*
\\
\deriv{R}{x} = \deriv{\alpha}{x} \ \ \ \mbox{where $R \regEq \alpha \in \EE$}
\eda
\end{defn}

All standard results, e.g.~expansion, carry over to the extended setting.

\begin{defn}[Empty Parse Trees]
  We assume a fixed set $\EE$ of equations.
  The (partial) function $\mkEmp{\cdot}$ computes an empty parse tree for regular expressions
  and equations.
\begin{haskell}
\mkEmp{\epsilon} = \Eps
\\
\mkEmp{r^*} = [] 
\\
\mkEmp{ \gamma_1 \conc \gamma_2 } = \Seq{(\mkEmp{\gamma_1})}{(\mkEmp{\gamma_2})}
\\
\mkEmp{ \alpha_1 + \alpha_2 } =  \IF\ \varepsilon \in \lang(\alpha_1) \ \THEN \Left\ (\mkEmp{\alpha_1}) \ELSE\ \Right\ (\mkEmp{\alpha_2})
\\
\mkEmp{R} = \mkEmp{\alpha} \ \ \ \WHERE\ R \regEq \alpha \in \EE
\end{haskell}
\end{defn}

\begin{defn}[Injection]
  ~~ 
\begin{haskell}
\inj_{\deriv{r^*}{x}} (v,vs) =  (\inj_{\deriv{r}{x}} ~v) : vs \\
\inj_{\deriv{(\gamma_1 \conc \gamma_2)}{x}} = 
 \hsbody{\lambda v.
  \hsalign{
     \CASE\ v ~\OF \\
    \Seq{v_1}{v_2} \arrow \Seq{(\inj_{\deriv{\gamma_1}{x}} ~v_1)}{v_2} \\
    \Left \ (\Seq{v_1}{v_2}) \arrow \Seq{(\inj_{\deriv{\gamma_1}{x}} ~v_1)}{v_2} \\
    \Right \ v_2 \arrow \Seq{\mkEmp{\gamma_1}}{(\inj_{\deriv{\gamma_2}{x}} ~v_2)} 
  }} \\
\inj_{\deriv{(\alpha_1+\alpha_2)}{x}} = 
  \hsbody{\lambda v.
  \hsalign{
     \CASE\ v ~\OF \\
     \Left \ v_1 \arrow \Left ~(\inj_{\deriv{\alpha_1}{x}} ~v_1) \\
     \Right \ v_2 \arrow \Right ~(\inj_{\deriv{\alpha_2}{x}} ~v_2)
  }} \\
\inj_{\deriv{x}{x}} \Eps = \Sym x
\\
\inj_{\deriv{R}{x}} v = \Fold\ (\inj_{\deriv{\alpha}{x}} v) \ \ \ \ \WHERE R \regEq \alpha \in \EE  
\end{haskell}
\end{defn}

\begin{defn}[Parsing]
  ~~
\begin{haskell}  
  parse = \lambda \gamma. \lambda w.
  \hsalign{
    \CASE\ w \ \OF \\
    \varepsilon \rightarrow \mkEmp{\gamma}
    \\
    x \conc w \rightarrow \inj_{\deriv{\gamma}{x}} (parse \ \deriv{\gamma}{x} \ w)
  }
\end{haskell}
\end{defn}

\section{Coercion Semantics}
\label{sec:coercion-semantics}

We assume function and constructor application to be left associative.
We write $k \ v_1 \dots \ v_m$ as a shorthand
for $( \dots (k \ v_1) \dots ) \ v_m$.

The coercions defined in this paper satisfy the following conditions.
Coercions are first-order where recursion, if any, always takes place
at the outermost level. There are no mutually recursive coercions.
All patterns in a case expression are disjoint.
Then, the following rules are sufficient for evaluation.

\begin{defn}[Big-Step Operational Semantics]
  \bda{cc}
  \tlabel{Lam} &
  \myirule{[x \mapsto v] c \Downarrow v'}
          {(\lambda x.c) \ v \Downarrow v'}
  \\
  \\
  \tlabel{Rec} &
  \myirule{[x \mapsto \REC\ x.c] c \Downarrow v'}
          {\REC\ x.c \Downarrow v'}
  \\
  \\
  \tlabel{Case} &
  \myirule{\exists i \in \{1,\dots,n\}
           \\ pat_i = k \ y_1 \dots y_m
           \\ \mbox{$[$} y_1 \mapsto v_1, \dots, y_m \mapsto v_m \mbox{$]$} c_i \Downarrow v'
        }
        {\CASE\ k \ v_1 \dots v_m \ \OF\ \mbox{$[$} pat_1 \Rightarrow c_1, \ldots, pat_n \Rightarrow c_n \mbox{$]$} \Downarrow v'}
  \eda
\end{defn}

In rule~\tlabel{Lam} we write $[x \mapsto v] c$ to denote replacing
the lambda-bound variable $x$ with argument $v$ in the body $c$.
Arguments $v$ are always parse trees.
Rule~\tlabel{Rec} unfolds the recursive function to carry out the evaluation.
Rule~\tlabel{Case} matches the incoming value against one of the patterns
and then applies to pattern binding to the selected case.
As we assume that patterns are disjoint, the choice of pattern $pat_i$
is unique.

\section{Implementation}
\label{sec:implementation}

We report on a Haskell implementation~\cite{regex-symbolic} of
the regular equation solving algorithm.
We choose Haskell as we can easily derive an implementation
following the formal description in the earlier section.

Given ${\cal E} = \{R_1
\regEq \alpha_1, \dots, R_n \regEq \alpha_n \}$, the algorithm solves
regular equations according to a predefined order among the regular language symbols
$R_i$s. For instance, assuming $R_1 \prec R_2
... \prec R_n$, our implementation solves $R_1 \regEq \alpha_1$ first, then $R_2
\regEq \alpha_2$, and finally $R_n \regEq \alpha_n$ as the default order.
The size of the resulting expressions is sensitive to the order
equations are solved.

This issue has been addressed in the state elimination setting
where via some heuristics certain states are favored to obtain
short regular expressions.
Moreira, Nabais, and Reis~\cite{DBLP:journals/corr/abs-1008-1656} discuss a number of heuristics.
Two of them were reported to be the most effective for a wide set of test cases.
We show how to adapt these heuristics to the regular equation solving setting.

\subsection{Delgado and Morais Heuristics}

In \cite{DBLP:conf/wia/DelgadoM04}, Delgado and Morais propose 
a strategy to choose a state
with the lowest weight to be eliminated in every iteration of the
state elimination method. Given a state, 
the weight function measures the weighted sum of the
regular expressions associated with incoming transitions, the ones
associated with the out-going transitions and those associated
with the loop transitions. We adapt this heuristic in the regular
expression equation solving by associating each regular language
symbol with a weight function.

\begin{defn} [In-coming, out-going and looping regular expressions]
Let ${\cal E} = \{R_1 \regEq \alpha_1, \dots, R_n \regEq \alpha_n
\}$. We define the in-coming regular expressions, out-going regular
expressions and looping regular expression of the regular language symbol
$R_i $ as
\bda{lll}
inRE({\cal E}, R_i) & = & \{ r | R_j \regEq \alpha_j \in {\cal E}
\wedge
R_i \neq R_j \wedge 
\alpha_j = r \conc R_i + \alpha_j'~ \mbox{for some} ~\alpha_j' \} \\ \\

outRE({\cal E}, R_i) & = & \{ r_1, ... , r_m\}~ \mbox{where} ~ R_i
\regEq  r_1 \conc R_1' + ...  + r_{m-1} \conc R_{m-1}' + r_m + r_{m+1}
\conc R_i  \in {\cal E} \\ \\

loopRE({\cal E}, R_i) & = &  r_{m+1}  ~\mbox{where} ~ R_i
\regEq  r_1 \conc R_1' + ...  + r_{m-1} \conc R_{m-1}' + r_m + r_{m+1}
\conc R_i  \in {\cal E} \\ \\

\eda
\end{defn}

\begin{defn}[Weight function] 
Let ${\cal E} = \{R_1 \regEq \alpha_1, \dots, R_n \regEq \alpha_n
\}$.  We define the weight function of a regular language symbol $R_i$
w.r.t ${\cal E}$ as 

\bda{lll}
 W({\cal E}, R_i)   & = & (in_i -1) * \Sigma_{r \in outRE({\cal E},
   R_i)} | r | \\ 
& + & (out_i -1) * \Sigma_{r \in inRE({\cal E}, R_i)} | r | \\ 
 & +  &  (in_i * out_i - 1) *  | loop_i | 
\eda 
where $ in_i =|inRE({\cal E}, R_i))| $, $out_i = |outRE({\cal E}, R_i)|$
and $loop_i = |loopRE({\cal E}, R_i)|$. Given a regular expression
$r$, $| r |$ denotes the alphabetic width of $r$.
\end{defn}

In each solving step, we first apply the weight function to identify the
$R_i \regEq \alpha_i$ with the lowest $W({\cal E}, R_i)$ value, then we apply rules \tlabel{Arden} and
\tlabel{Subst} to solve $R_i \regEq \alpha_i$.

\subsection{Cycle Counting Heuristics}

Cycle counting is another effective heuristic strategy reported in
\cite{DBLP:journals/corr/abs-1008-1656}. The idea of this heuristics is
to treat the regular equations as a directed graph and use the number
of unique cycles as the weight.  In each iteration, the $R_i \regEq
\alpha_i$ with the least number of cycle counts will be selected.

\subsection{Benchmarks}
\fig{f:benchmarks}{Default solving order vs Delgado-Morais heuristics
  vs Cycle Counting Heuristics}{
\begin{center}
\begin{tabular}{ |c|c|c|c| } 
 \hline
  Num. of equations & Num. of symbols & Delgado-Morais vs default &
                                                               Cycle-Count vs default\\ 
 \hline
  5 & 5 & 0.1524 & 0.3464 \\ 
  10 & 5 & 0.0078 & 0.0335 \\ 
  10 & 10 & 0.0093 & 0.0679 \\
 \hline
\end{tabular}
\end{center}
}

Figure~\ref{f:benchmarks} reports benchmark results
where we compare the above-mentioned heuristics against the default
solving order. For each heuristics we measure the average regular
expression size ratio between those obtained from the
heuristics strategy and those computed using the default solving
order. Each test set consists of 1000 test cases generated uniformly by the DFA enumeration framework
\cite{ALMEIDA200793}. We then convert the DFAs into regular
expression equation sets and feed them into the solvers. As observed
from the benchmark results, both heuristic strategies yield more compact regular
expressions as compared to the default solving orders. Similar results
were reported for the state elimination method in \cite{DBLP:journals/corr/abs-1008-1656}.
The source code of the benchmark can be found in \cite{regex-symbolic}.


\section{Observations}

Example~\ref{ex:solve1} suggests that if there is a mutual dependency among
equations and equations are reordered, the solutions we obtain, albeit semantically the same,
may differ syntactically.
We verify this claim below. In fact, we can also show that in the absence
of mutual dependencies, the solution obtained is independent of the
order in which equations are solved.

\begin{defn}
  We say a set ${\cal E}$ of $n$ equations is in \emph{strict order}
  if equations can be sorted such that for each $R_i \regEq s_i \conc R_i + \alpha_i$
  we have that none of the variables $R_{i+1}, \dots, R_n$ appear in $\alpha_i$.
\end{defn}
Assuming equivalences such as $\emptyset \conc R \semeq \emptyset$, it is easy to make any set ${\cal E}$ non-strict.
Hence, we assume that in the initial set ${\cal E}$ for each subcomponent $s \conc R$
we have that $\lang(s) \not \semeq \emptyset$.

\begin{prop}[Order Independent Syntactic Form of Solutions]
 \label{prop:order-independent}  
  Let ${\cal E}$ be a set of equations in strict order.
  Let ${\cal E'}$ be a permutation of the equations ${\cal E}$.~\footnote{Recall that we treat the set of equations like a list.}
  Then, we find that $\psi = \solve{{\cal E}}$ and $\psi' = \solve{{\cal E'}}$
  for some $\psi, \psi'$ where for each $R \in {\it dom}({\cal E})$ we have that
  $\psi(R)$ and $\psi'(R)$ are syntactically the same.
\end{prop}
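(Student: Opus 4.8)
The plan is to show that both $\psi$ and $\psi'$ coincide with the solution produced by eliminating the variables in the canonical (topological) order $R_1,\dots,R_n$ witnessing the strict-order property, and that this canonical solution has the closed triangular form $\psi(R_i) = s_i^{*}\conc\normalize{\theta_i}$, where $\theta_i$ arises from $\alpha_i$ by substituting the already-computed solutions for $R_1,\dots,R_{i-1}$. Every solving derivation fixes a permutation of the order in which variables are eliminated: each variable is discharged by a \tlabel{Subst} step, preceded (when a self-loop is present) by the \tlabel{Arden} step on the same equation, and these may be scheduled consecutively without affecting the outcome since \tlabel{Arden} acts only on that equation. Because any permutation is a composition of adjacent transpositions, it then suffices to establish a single local commutation (diamond) property: eliminating a variable $A$ and then a variable $B$ yields exactly the same configuration $\Config{\psi}{{\cal E}}$ as eliminating $B$ and then $A$. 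Repeatedly applying this rewrites the elimination order of $\mathcal{E}'$ into that of $\mathcal{E}$ without ever changing the resulting substitution.

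First I would record two invariants that the solving rules preserve and that make the commutation tractable. First, the coefficients $s$ of self-loops $s \conc R$ never refer to symbols: this holds initially by the normal-form convention and is preserved by \tlabel{Subst}, since substituting a symbol-free expression and then distributing via \tlabel{E1}--\tlabel{E2} only produces coefficients that are products of symbol-free expressions. Second, the strict-order relation is preserved: if the remaining equations admit a total order in which every right-hand side depends only on strictly smaller variables, then after eliminating $C$ and replacing $C$ by its solution this still holds, because $C$'s solution depends only on variables below $C$. Consequently, for any two distinct unsolved variables $A,B$ we may assume without loss of generality $B < A$, so that $A \not\in \alpha_B$ and $A$ does not occur in $B$'s Arden solution $\theta_B = s_B^{*}\conc\alpha_B$.

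Under these invariants the commutation computation is direct. Write $A \regEq s_A \conc A + \alpha_A$ and $B \regEq s_B \conc B + \alpha_B$ for the current equations, with $\theta_A = s_A^{*}\conc\alpha_A$. Eliminating $A$ first and then $B$ applies to every surviving right-hand side and to every codomain entry of $\psi$ the composite substitution $\{B \mapsto \theta_B\}\circ\{A \mapsto \theta_A\}$, which, because $A \not\in \theta_B$, equals the simultaneous substitution $\{A \mapsto \{B\mapsto\theta_B\}(\theta_A),\ B \mapsto \theta_B\}$. Eliminating $B$ first and then $A$ applies $\{A \mapsto s_A^{*}\conc\alpha_A'\}\circ\{B\mapsto\theta_B\}$, where $\alpha_A' = \normalize{\{B\mapsto\theta_B\}(\alpha_A)}$; since $s_A$ is symbol-free, \tlabel{Arden} for $A$ commutes with the $B$-substitution and $s_A^{*}\conc\alpha_A' \simeq \{B\mapsto\theta_B\}(\theta_A)$, so this composite is again the same simultaneous substitution. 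Thus both orders feed the identical net substitution into every equation and into every element of the codomain of $\psi$, and the stored solutions for $A$ and $B$ agree ($B$'s because $A\not\in\theta_B$, $A$'s because both paths store $s_A^{*}\conc$ applied to $\{B\mapsto\theta_B\}(\alpha_A)$).

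The genuine obstacle is the normalization bookkeeping. The two orders insert the \emph{same} net substitution but interleave the rules \tlabel{E1}--\tlabel{E7} differently, so syntactic identity requires that normalization be a well-defined function on $\simeq$-classes. I would therefore prove two facts: that $\simeq$ is a congruence for substitution (if $\gamma_1 \simeq \gamma_2$ then $\{R\mapsto\alpha\}(\gamma_1) \simeq \{R\mapsto\alpha\}(\gamma_2)$, by replaying the derivation of $\gamma_1 \simeq \gamma_2$ under the substitution using \tlabel{E6}--\tlabel{E7}), and that the normalization of Definition~\ref{def:solving} computes a unique canonical representative of each $\simeq$-class. The latter is the hard part: one must show the rewrite system induced by \tlabel{E1}--\tlabel{E7} (distribute concatenation over sums, re-associate, merge summands sharing a symbol via \tlabel{E4}, and sort summands via \tlabel{E5}) is terminating and locally confluent, so that Newman's Lemma yields unique normal forms. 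Given these, both paths normalize $\simeq$-equivalent expressions and hence produce identical syntactic results, completing the commutation step and the proposition. I would also flag one subtlety: rule \tlabel{Subst} as written substitutes into the codomain of $\psi$ without re-normalizing, so the statement must be read with solutions taken up to $\normalize{\cdot}$ (equivalently, with the uniform normalization of the coercive solver); the canonical-form property above is exactly what makes this reading well-defined.
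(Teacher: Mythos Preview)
Your approach and the paper's share the same core insight—the strict-order (triangular) dependency structure means that eliminating $R_i$ can only touch equations with larger index, and only in their $\alpha_j$-part since the self-loop coefficients $s_j$ carry no symbols—but you develop it considerably further. The paper's proof is essentially a two-observation sketch: it records (i) that the Arden substitution $[R_i \mapsto s_i^{*}\conc\alpha_i]$ affects only later equations and only their $\alpha_j$ component, and (ii) that solving $R_j$ cannot disturb earlier equations or earlier entries of $\psi$; from this it simply asserts order-independence. There is no explicit commutation argument, no reduction to adjacent transpositions, and no discussion of normalization at all.

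Your route via a local diamond for adjacent eliminations, together with a confluence argument for the normalizer, is a genuine strengthening: it makes precise the step the paper leaves implicit (why triangular dependencies force the two composite substitutions to coincide), and it isolates exactly where syntactic identity rather than mere $\simeq$-equivalence has to be argued. You also correctly flag a subtlety the paper ignores, namely that rule \tlabel{Subst} normalizes the remaining equations but not the codomain of $\psi$. One caveat on your side: the normalization described in the paper's termination proof is not fully deterministic—it fixes only that duplicate symbols are merged via \tlabel{E4} and that the self-loop summand is moved to the front, not the relative order of the remaining summands nor the internal shape of the constant term $t$. Your plan to obtain unique normal forms therefore presupposes a canonical normalizer that the paper never actually pins down; the proposition must be read relative to some fixed deterministic $\normalize{\cdot}$, which is precisely the reading you advocate in your final paragraph. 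Under that reading your argument goes through and does more work than the paper's own proof.
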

\begin{proof}
  We assume equations in ${\cal E}$ are enumerated in strict order.
  Consider the $i$th equation $R_i \regEq s_i \conc R_i + \alpha_i$.
  Recall that normalization always achieves this form.
  Due to the strict order assumption, the Arden substitution $[R_i \mapsto s_i^* \conc \alpha_i]$
  only affects later equations $R_j \regEq s_j \conc R_j + \alpha_j$ where $i < j$.
  The shape of equations guarantees that only the $\alpha_j$ component will be affected.
  
  Another consequence of the strict order assumption is that,
  if we solve the equation connected to $R_j$, the resulting Arden
  substitution $[R_j \mapsto s_j^* \conc \alpha_j]$
  will not affect any of the earlier equations (solving steps). 
  Hence, we conclude that solving of equations in strict order yields the same result
  regardless of the order in which equations are solved. \qed
\end{proof}

\begin{prop}[Order Dependent Syntactic Form of Solutions]
\label{prop:dependent-syntactic-form}  
  Let ${\cal E}$ be a set of equations not in strict order.
  Then, we find a permutation ${\cal E'}$ of ${\cal E}$
  such that $\psi = \solve{{\cal E}}$ and $\psi' = \solve{{\cal E'}}$
  for some $\psi, \psi'$ where for some $R \in {\it dom}({\cal E})$ we have that
  $\psi(R)$ and $\psi'(R)$ are syntactically different.
\end{prop}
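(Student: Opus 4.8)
The plan is to exploit the fact that failing to be in strict order is precisely the presence of a directed cycle in the dependency structure of ${\cal E}$. First I would make this explicit: associate with ${\cal E}$ the dependency digraph on ${\it dom}({\cal E})$ having an edge $R_i \to R_j$ whenever $R_j \in \alpha_i$ and $i \ne j$, where self-loops (coming from the $s_i \conc R_i$ summand) are deliberately excluded. Being in strict order is exactly the existence of a topological sort of this digraph, so a set that is \emph{not} in strict order must contain a directed cycle $R_{j_1} \to \dots \to R_{j_k} \to R_{j_1}$ with $k \ge 2$. The whole argument hinges on contrasting two permutations that solve the variables of this cycle in different relative orders and showing that the solution of one designated cycle variable $R$ comes out syntactically different.

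Second, I would track how the Kleene star is created by rule \tlabel{Arden}. The key mechanism, already visible in Example~\ref{ex:solve1}, is that when a cycle variable $R$ is solved \emph{last} among the cycle, the repeated \tlabel{Subst} steps have routed every loop along the cycle back into $R$'s equation, so that by the time \tlabel{Arden} fires on $R$ it has the form $R \regEq s \conc R + t$ whose starred body $s$ is the \emph{accumulated} self-loop around the whole cycle. By contrast, in a permutation in which $R$ is solved \emph{before} the rest of the cycle, the body of the star attached to $R$ is only its own (strictly smaller) self-loop, or $R$ carries no leading star at all and is removed by a plain substitution. In Example~\ref{ex:solve1} this is exactly the difference between the factor $x^*$ obtained when $R_1$ is solved first and $(x + y \conc x^* \conc y)^*$ obtained when $R_1$ is solved last; the two resulting expressions for $R_1$ are therefore not the same syntactic object. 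I would take the outermost (leftmost) Kleene-starred subexpression of $\psi(R)$, together with its alphabetic width, as the invariant that the two orders drive apart.

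Finally, I would argue that this difference survives normalization. Since the normalization used in \tlabel{Subst} is a fixed application of rules \tlabel{E1}--\tlabel{E7}, none of which introduces, deletes, or fuses Kleene stars, and none of which factors a common starred factor across the top-level concatenation/sum boundary, the distinguishing feature identified above cannot be collapsed, so $\psi(R)$ and $\psi'(R)$ remain syntactically different. I expect the \textbf{main obstacle} to be making ``syntactically different'' rigorous uniformly for \emph{every} non-strict ${\cal E}$ rather than just for the two-variable pattern of Example~\ref{ex:solve1}: because the solving relation is non-canonical, one must show that \emph{no} admissible sequence of \tlabel{E1}--\tlabel{E7} steps equates the two forms, and one must dispose of the corner cases (a cycle variable with versus without an initial self-loop, cycles of length greater than two, and interference from equations lying off the cycle). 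I would tame the longer-cycle case by choosing the permutation ${\cal E'}$ so that it differs from ${\cal E}$ only in the relative order of two \emph{adjacent} variables of the cycle, thereby reducing the comparison to the mutually dependent two-variable situation that Example~\ref{ex:solve1} already settles.
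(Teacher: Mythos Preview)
Your approach is essentially the same as the paper's, just elaborated with more scaffolding. The paper's proof also reduces to a two-variable mutual dependency (asserting that because strict order fails, ``we must encounter, either initially or during solving, two equations'' of the form $R_i \regEq s_i\conc R_i + s_j'\conc R_j + \alpha_i$ and $R_j \regEq s_j\conc R_j + s_i'\conc R_i + \alpha_j$), then contrasts the two solving orders and reads off the syntactic difference from the \emph{leading} starred factor of the solution: solving $R_i$ first yields $\psi(R_i) = s_i^*\conc(\dots)$, whereas solving $R_j$ first yields $\psi'(R_i) = (s_i + s_j'\conc s_j^*\conc s_i')^*\conc(\dots)$.

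Your dependency-digraph formulation and the reduction to two adjacent cycle variables are a cleaner way of justifying the paper's bare assertion that such a mutually dependent pair must appear; the paper hides this behind ``either initially or during solving''. Your chosen invariant (the outermost starred subexpression and its alphabetic width) is exactly the distinguisher the paper uses, and your observation that rules \tlabel{E1}--\tlabel{E7} neither create nor merge stars is what makes the paper's terse ``we can conclude that the final solution will be of the following form'' go through. In short, you have identified the same mechanism and the same witness; the paper simply asserts several of the steps you propose to argue, so your write-up would in fact be more complete than the original on the corner cases you flag (absent self-loop, cycles of length $>2$).
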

\begin{proof}
  As the strict order condition is violated,
  we must encounter (either initially or during solving) two equations of the following form
  \bda{ll}
  R_i \regEq s_i \conc R_i + s_j' \conc R_j + \alpha_i & (E_i)
  \\
  R_j \regEq s_j \conc R_j + s_i' \conc R_i + \alpha_j & (E_j)
  \eda
  where neither $R_i$ nor $R_j$ appear in $\alpha_i$ or $\alpha_j$.
  
  Suppose, we solve $(E_i)$ first this leads to the Arden substitution
  $[R_i \mapsto s_i^* \conc (s_j' \conc R_j + \alpha_i)]$.
  Applied on $(E_j)$ we find
  \bda{c}
    R_j \regEq (s_j + s_i' \conc s_i^* \conc s_j') \conc R_j + (s_i' \conc s_i^* \conc \alpha_i + \alpha_j)
  \eda
  We can conclude that the final solution will be of the following form
  \bda{c}
   \psi = [R_i \mapsto s_i^* \dots, R_j \mapsto (s_j + s_i' \conc s_i^* \conc s_j') \dots, \dots]
  \eda

  Reversing the order of equations, given preference to $(E_j)$, results in a final solution
  of the following form
  \bda{c}
   \psi' = [R_j \mapsto s_j^* \dots, R_i \mapsto (s_i + s_j' \conc s_j^* \conc s_i') \dots, \dots]
  \eda
  We immediately find that $\psi(R_i)$ and $\psi'(R_i)$ are syntactically different.
  This concludes the proof. \qed
\end{proof}

For observations concerning  differences in terms of structural complexity
in case of order dependent syntactic forms of solutions
we refer to~\cite{DBLP:journals/ijfcs/GruberH15} for details.

\section{Intersection}
 
\begin{defn}[Equations for Intersection]
  Let $r, s$ be two regular expressions.
  For each pair $(r',s') \in \Desc{r} \times \Desc{s}$
  we introduce a variable $R_{r',s'}$.
  For each such $R_{r',s'}$ we define an equation of the following form.
  $R_{r',s'} \regEq \phi$ if $\lang(r') = \emptyset$ or $\lang(s') = \emptyset$.
Otherwise, $R_{r',s'} \regEq \sum_{x \in \Sigma} x \conc R_{\simp{\deriv{r'}{x}}, \simp{\deriv{s'}{x}}} + t$
where $t = \varepsilon$ if $\varepsilon \in \lang(r'), \varepsilon \in \lang(s')$, otherwise $t=\phi$.
  All equations are collected in a set ${\cal I}_{r,s}$.

  Let $\psi = \solve{{\cal I}_{r,s}}$.
  Then, we define $r \cap s = \psi(R_{r,s})$.
\end{defn}

The above definition is well-defined. Same arguments as for Definition~\ref{def:eq-subtract}  apply.

\begin{lem}
 \label{le:intersect-nf}  
  Let $r, s$ be two regular expressions.
  Then, we find that
  \bda{c}
\lang(r) \cap \lang(s) \semeq \sum_{x \in \Sigma} x \conc (\lang(\simp{\deriv{r}{x}}) \cap \lang(\simp{\deriv{s}{x}})) + T
\eda
where $T = \{ \varepsilon \}$ if $\varepsilon \in \lang(r), \varepsilon \in \lang(s)$,
otherwise $T = \emptyset$.
\end{lem}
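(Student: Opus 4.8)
The plan is to mirror the proof of Lemma~\ref{le:subtract-nf}, replacing the algebraic laws for subtraction by the corresponding laws for intersection. First I would apply the Expansion Theorem~\ref{th:representation} together with Lemma~\ref{le:sim-re} to rewrite both operands in canonical derivative form, namely
\bda{c}
r \semeq \sum_{x \in \Sigma} x \conc \simp{\deriv{r}{x}} + t
\qquad \mbox{and} \qquad
s \semeq \sum_{x \in \Sigma} x \conc \simp{\deriv{s}{x}} + t'
\eda
where $t = \varepsilon$ if $r$ is nullable and $t = \phi$ otherwise, and likewise $t' = \varepsilon$ if $s$ is nullable and $t' = \phi$ otherwise. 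Lemma~\ref{le:sim-re} is what licenses passing from $\deriv{r}{x}$ to its canonical representative $\simp{\deriv{r}{x}}$ without changing the denoted language.

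Next I would record the algebraic laws for intersection needed to push $\cap$ through this sum-of-prefixes presentation:
\bda{c}
(R + S) \cap T \semeq (R \cap T) + (S \cap T)
\qquad
(x \conc R) \cap (x \conc S) \semeq x \conc (R \cap S)
\\
(x \conc R) \cap (y \conc S) \semeq \phi \ \ \mbox{where $x \not= y$}
\qquad
\varepsilon \cap (x \conc R) \semeq \phi
\\
\varepsilon \cap \varepsilon \semeq \varepsilon
\qquad
R \cap \phi \semeq \phi
\eda
together with commutativity of $\cap$ and the associativity, commutativity of $+$. Distributing $\cap$ over the two expanded sums produces one term for each pair $(x,y) \in \Sigma \times \Sigma$ plus the mixed terms involving $t$ and $t'$. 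The second and third laws show that only the diagonal pairs $x = y$ survive as $x \conc (\simp{\deriv{r}{x}} \cap \simp{\deriv{s}{x}})$, while every off-diagonal pair collapses to $\phi$ and is absorbed. The nullable terms are handled by the remaining laws: a literal-prefixed word never matches $\varepsilon$, so $t \cap (y \conc \dots)$ and $t' \cap (x \conc \dots)$ vanish, and $t \cap t'$ contributes $\varepsilon$ exactly when both $t = \varepsilon$ and $t' = \varepsilon$, i.e.~when both $r$ and $s$ are nullable. This yields precisely the claimed right-hand side with $T = \{\varepsilon\}$ iff $\varepsilon \in \lang(r)$ and $\varepsilon \in \lang(s)$.

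I expect the main obstacle to be the bookkeeping of the quadratically many cross terms generated by distributing $\cap$ over both expansions, in particular arguing uniformly that all off-diagonal literal pairs and all literal-versus-$\varepsilon$ pairs denote the empty language. A cleaner alternative that avoids this bookkeeping entirely is a direct word-based argument using the defining property $\lang(\deriv{r}{x}) = \{ w \mid x \conc w \in \lang(r)\}$: a nonempty word $x \conc w$ lies in $\lang(r) \cap \lang(s)$ iff $w \in \lang(\deriv{r}{x}) \cap \lang(\deriv{s}{x})$, and the empty word lies in the intersection iff both operands are nullable, with Lemma~\ref{le:sim-re} replacing each derivative by its canonical form. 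I would present the algebraic version for consistency with Lemma~\ref{le:subtract-nf}, while noting that the word-based argument provides an immediate independent check.
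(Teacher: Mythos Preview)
Your proposal is correct and follows essentially the same approach as the paper: the paper's proof simply says ``Similar to proof of Lemma~\ref{le:subtract-nf}'' and then lists the analogous algebraic laws for intersection (distributivity of $\cap$ over $+$, $(x\conc R)\cap(x\conc S)\semeq x\conc(R\cap S)$, the off-diagonal and $\varepsilon$-versus-literal cases collapsing to $\phi$, and $R\cap\phi\semeq\phi$). Your version is in fact more explicit than the paper's, and the word-based alternative you sketch is a sound independent check.
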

\begin{proof}
  Similar to proof of Lemma~\ref{le:subtract-nf}.
  We make use of the following algebraic laws.
  
 \bda{c}
 R \cap \phi \semeq \phi
 \\
 (R + S) \cap T \semeq (R \cap T) + (S \cap T)
 \\
 R \cap (S + T) \semeq (R \cap S) + (R \cap T)
 \\
 (x \conc R) \cap (x \conc S) \semeq x \conc (R \cap S)
 \\
 (x \conc R) \cap (y \conc S) \semeq \phi \ \ \ \ \ \  \mbox{where $x \not= y$}
 \\
 \varepsilon \cap (x \conc R) \semeq \phi
 \\
 (x \conc R) \cap \varepsilon \semeq \phi
 \eda  
\qed
\end{proof}

Applying similar reasoning as for Theorem~\ref{theo:subtract}
we obtain the following result.

\begin{thm}[Intersection]
  Let $r, s$ be two regular expressions.
  Then, we find that $r \cap s$ is unambiguous and $\lang(r \cap s) \semeq \lang(r) \cap \lang(s)$.
\end{thm}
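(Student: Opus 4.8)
The plan is to follow the structure of the proof of Theorem~\ref{theo:subtract} essentially verbatim, replacing subtraction by intersection throughout. The two assertions---unambiguity of $r \cap s$ and the equivalence $\lang(r \cap s) \semeq \lang(r) \cap \lang(s)$---are established independently, with $\psi = \solve{{\cal I}_{r,s}}$ fixed throughout and $r \cap s = \psi(R_{r,s})$.

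For unambiguity I would first observe that every equation in ${\cal I}_{r,s}$ is, by construction, non-overlapping. Each nontrivial equation has the form $R_{r',s'} \regEq \sum_{x \in \Sigma} x \conc R_{\simp{\deriv{r'}{x}}, \simp{\deriv{s'}{x}}} + t$ in which the leading literals are pairwise distinct and $t \in \{\varepsilon, \phi\}$, while every remaining equation is simply $R_{r',s'} \regEq \phi$. Both shapes satisfy the non-overlapping condition, so Theorem~\ref{theo:reg-solve-unambig} applies directly and yields that $\psi(R_{r,s})$ is unambiguous.

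For the equivalence claim I would reuse the coalgebraic (bisimulation) argument. The candidate relation is
$$\{ (\lang(\psi(R_{r',s'})), \lang(r') \cap \lang(s')) \mid (r',s') \in \Desc{r} \times \Desc{s} \}.$$
I would show it is a bisimulation, i.e.\ that related languages agree on nullability and that their $x$-derivatives are again related for every $x \in \Sigma$. In the base case $\lang(r') = \emptyset$ or $\lang(s') = \emptyset$, we have $R_{r',s'} \regEq \phi$, so the first component is $\emptyset$, and $\lang(r') \cap \lang(s') = \emptyset$ as well; both conditions then hold trivially. In the general case I would read off nullability from the chosen $t$: $\lang(\psi(R_{r',s'}))$ contains $\varepsilon$ iff $t = \varepsilon$ iff $\varepsilon \in \lang(r')$ and $\varepsilon \in \lang(s')$, which by Lemma~\ref{le:intersect-nf} is exactly the nullability condition for $\lang(r') \cap \lang(s')$. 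Differentiating each side by $x$, the defining equation together with the fact that $\psi$ is a solution gives the first derivative as $\lang(\psi(R_{\simp{\deriv{r'}{x}}, \simp{\deriv{s'}{x}}}))$, while Lemma~\ref{le:intersect-nf} gives the second as $\lang(\simp{\deriv{r'}{x}}) \cap \lang(\simp{\deriv{s'}{x}})$; since $(\simp{\deriv{r'}{x}}, \simp{\deriv{s'}{x}}) \in \Desc{r} \times \Desc{s}$, this pair again lies in the candidate relation. Hence the relation is a bisimulation, and specializing to $(r,s)$ yields $\lang(r \cap s) \semeq \lang(r) \cap \lang(s)$.

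The step I expect to be the main obstacle is justifying that the $x$-derivative of $\lang(\psi(R_{r',s'}))$ is precisely $\lang(\psi(R_{\simp{\deriv{r'}{x}}, \simp{\deriv{s'}{x}}}))$. This uses that $\psi$ is a genuine solution, so that $\lang(\psi(R_{r',s'}))$ decomposes exactly as its defining equation dictates, and that the leading literals in that equation are pairwise distinct (the non-overlapping property), so that differentiating by $x$ selects the single summand $x \conc R_{\simp{\deriv{r'}{x}}, \simp{\deriv{s'}{x}}}$ and discards $t$. Making this derivative-extraction precise---and invoking Lemma~\ref{le:sim-re} so that passing to the canonical representatives $\simp{\cdot}$ preserves the relevant languages---is the only genuinely non-routine part; everything else is a mechanical transcription of the subtraction proof.
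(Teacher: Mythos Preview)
Your proposal is correct and mirrors the paper's own argument essentially verbatim: the paper simply notes that the equations in ${\cal I}_{r,s}$ are non-overlapping, invokes Theorem~\ref{theo:reg-solve-unambig} for unambiguity, and for the equivalence claim refers back to the coalgebraic bisimulation proof of Theorem~\ref{theo:subtract} with Lemma~\ref{le:intersect-nf} in place of Lemma~\ref{le:subtract-nf}. Your write-up is in fact more detailed than the paper's, which leaves the derivative-extraction step you flagged as the ``main obstacle'' entirely implicit.
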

For intersection, all equations are in non-overlapping form and thus we can establish unambiguity.
For the remaining part, we follow  the structure of the proof of Theorem~\ref{theo:subtract}.

\section{Optimizations}

Employing additional similarity rules significantly can reduce
the size of the equations and yields then often more smaller solutions.
We observe this effect in case of subtraction.

\begin{example}
  We build the subtraction
  among regular expression $x^* \conc y^*$ and $x^*$.
  Without any optimizations (using plain canonical derivatives via similarity rules (Idempotency), (Commutativity), and (Associativity)),
  we generate 24 equation and obtain the (subtracted) result
  \bda{c}
  ((x \conc (x \conc ((x^* \conc y) \conc (y^* \conc \varepsilon)))) + (x \conc (y \conc (y^* . \varepsilon))) + (y \conc (y \conc (y^* \conc \varepsilon))) + (y \conc \varepsilon))
  \eda
  By employing further similarity rules such as (Elim1) etc, we generate 6 equations and obtain the result
  \bda{c}
   ((x^* \conc y) \conc y^*)
  \eda
\end{example}

\end{document}